\documentclass[superscriptaddress,amsmath,A4,pra,twocolumn,showpacs]{revtex4} 
\usepackage{amssymb}
\usepackage{amsthm}
\usepackage{graphicx}
\usepackage{color}
\usepackage{young}

%
%

\usepackage[matrix,frame,arrow]{xy}
\usepackage{amsmath}
\newcommand{\qw}[1][-1]{\ar @{-} [0,#1]}

\newcommand{\multigate}[2]{*+<1em,.9em>{\hphantom{#2}} \qw \POS[0,0].[#1,0];p !C *{#2},p \save+LU;+RU **\dir{-}\restore\save+RU;+RD **\dir{-}\restore\save+RD;+LD **\dir{-}\restore\save+LD;+LU **\dir{-}\restore}

\newcommand{\ghost}[1]{*+<1em,.9em>{\hphantom{#1}} \qw}




\newcommand{\ustick}[1]{*!D!<0em,-.5em>=<0em>{#1}}

\newcommand{\Qcircuit}[1][0em]{\xymatrix @*=<#1>}


\newcommand{\pureghost}[1]{*+<1em,.9em>{\hphantom{#1}}}



\newcommand{\bra}[1]{\langle#1|}
\newcommand{\ket}[1]{|#1\rangle}
\newcommand{\braket}[2]{\langle#1|#2\rangle}
\newcommand{\ketbra}[2]{{\ket{#1}\bra{#2}}}
\newcommand{\Bra}[1]{\langle \! \langle#1|}
\newcommand{\Ket}[1]{|#1\rangle \! \rangle}
\newcommand{\BraKet}[2]{\langle \! \langle #1|#2 \rangle \! \rangle}
\newcommand{\KetBra}[2]{{\Ket{#1}\Bra{#2}}}

\newcommand{\hilb}[1]{\mathcal{#1}}

\def\<{\langle}\def\>{\rangle}
\DeclareMathOperator{\Tr}{Tr}

\DeclareMathOperator{\spn}{span}

\newcommand{\chu}{\mathcal{U}}

\newcommand{\usbl}{C}
\newcommand{\DP}{\widetilde{d}}
\newcommand{\corn}{\textrm{Cor}}
\newcommand{\jJall}{J^{\Box}}

\newtheorem{lemma}{Lemma}
\newtheorem{theorem}{Theorem}

\begin{document}

\title{Optimal probabilistic storage and retrieval of unitary channels}

\author{Michal Sedl\'ak}
\affiliation{RCQI, Institute of Physics, Slovak Academy of Sciences, D\'ubravsk\'a cesta 9, 84511 Bratislava, Slovakia}
\affiliation{Faculty of Informatics,~Masaryk University,~Botanick\'a 68a,~60200 Brno,~Czech Republic}
\author{Alessandro Bisio}
\affiliation{QUIT group, Dipartimento di Fisica, INFN Sezione di Pavia, via Bassi
  6, 27100 Pavia, Italy}
\author{M\'ario Ziman}
\affiliation{RCQI, Institute of Physics, Slovak Academy of Sciences, D\'ubravsk\'a cesta 9, 84511 Bratislava, Slovakia}
\affiliation{Faculty of Informatics,~Masaryk University,~Botanick\'a 68a,~60200 Brno,~Czech Republic}

\begin{abstract}
 We address the question of a quantum memory storage of
  quantum dynamics. In particular, we design an optimal
  protocol for $N\to 1$ probabilistic storage-and-retrieval of unitary
  channels on $d$-dimensional quantum systems.
  If we may access the unknown unitary gate only $N$-times, the
  optimal success probability of perfect retrieval of its single use
  is $N/(N-1+d^2)$. The derived size of the memory system
  exponentially improves the known upper bound on the size of the program
  register needed for probabilistic programmable quantum processors.
  Our results are closely related to probabilistic perfect alignment
  of reference frames and probabilistic port-based teleportation.
 \end{abstract}

\pacs{03.67.-a, 03.67.Ac, 03.65.Fd}


\maketitle

\emph{Introduction.}  Since the discovery of the first quantum algorithms
\cite{shor,grover} and protocols \cite{bb84,teleportation} the
information processing with quantum systems has challenged
basic paradigms and existing limitations of computer science. In the
last few decades we have discovered that quantum information cannot be
cloned \cite{noclon}, its ``logical value" cannot be inverted
\cite{not_gate}, quantum processors cannot be universally programmed
\cite{nielsen1}, and universal multimeters do not exist
\cite{multimeter,multimeter1}. No doubt, any of these programmable devices would
represent a very useful piece of quantum technology, thus,
their approximate realisations are of foundational interest
\cite{buzek,multimeter1,scarani,processor,multimeter}. The no-go
restrictions imposed by quantum theory are treated in two ways. Either
we ask for an approximate performance, or we allow that the perfect
performance happens with some probability of failure.


Studies of optimal approximate cloners initiated by Hillery and Bu\v
zek \cite{buzek} demonstrated that such non-ideal devices are of practical
relevance and this motivated the study of other universal devices.
In particular, it was shown that
quantum theory limits the fidelity of $1\to N$ clones of qubits to
$(2N+1)/3N$ \cite{gisin1997}. For quantum processors Nielsen and
Chuang \cite{nielsen1} proved that perfect (error free) implementation
of $k$ distinct unitary transformations requires at least $k$
dimensional program register. Recently, the cloning was considered
also for quantum transformations \cite{chiri1,gtuc}. This unveiled an unexpected
feature called super-replication \cite{dur,chiri2}. In this protocol,
starting with $N$ copies of a qubit unitary transformation $U$
one deterministically generates up to $N^2$ copies of $U$
with an exponentially small error rate.
While studying cloning of unitaries it was realized there is a closely
related task of storage-and-retrieval (SAR), which only differs in the
causal order of available resources.
While in the cloning the cloned device is available after the input
states are at the disposal, one can consider also a task where this
order is reversed, thus, the device is available only before the
input states. In such case, we need to learn \cite{foot1}
and somehow \emph{store} the action of the device and \emph{retrieve}
it once the input states are available.

\emph{Problem formulation.}  The devices transforming states of
$d$-dimensional quantum systems associated with Hilbert space
$\hilb{H}$ are formalized as quantum channels, i.e. completely
positive trace-preserving linear maps on the space
$\mathcal{L}(\hilb{H})$ of linear operators on $\hilb{H}$. Suppose an
unknown channel $\chu$ is provided for experiments and we may access it $N$
times. However, we are asked to apply $\chu$ on an unknown state $\xi$
only after we lost the access to this channel. Therefore, our aim is
to find an optimal strategy that stores $\chu$ in a state of a quantum
memory (associated with Hilbert space $\hilb{H}_M$) and allows us to
retrieve its action when needed. In the approximative settings this
task (for unitary channels) was studied in Ref.\cite{bisilearn}.
\begin{figure}
  \begin{center}
    \includegraphics[width=8cm]{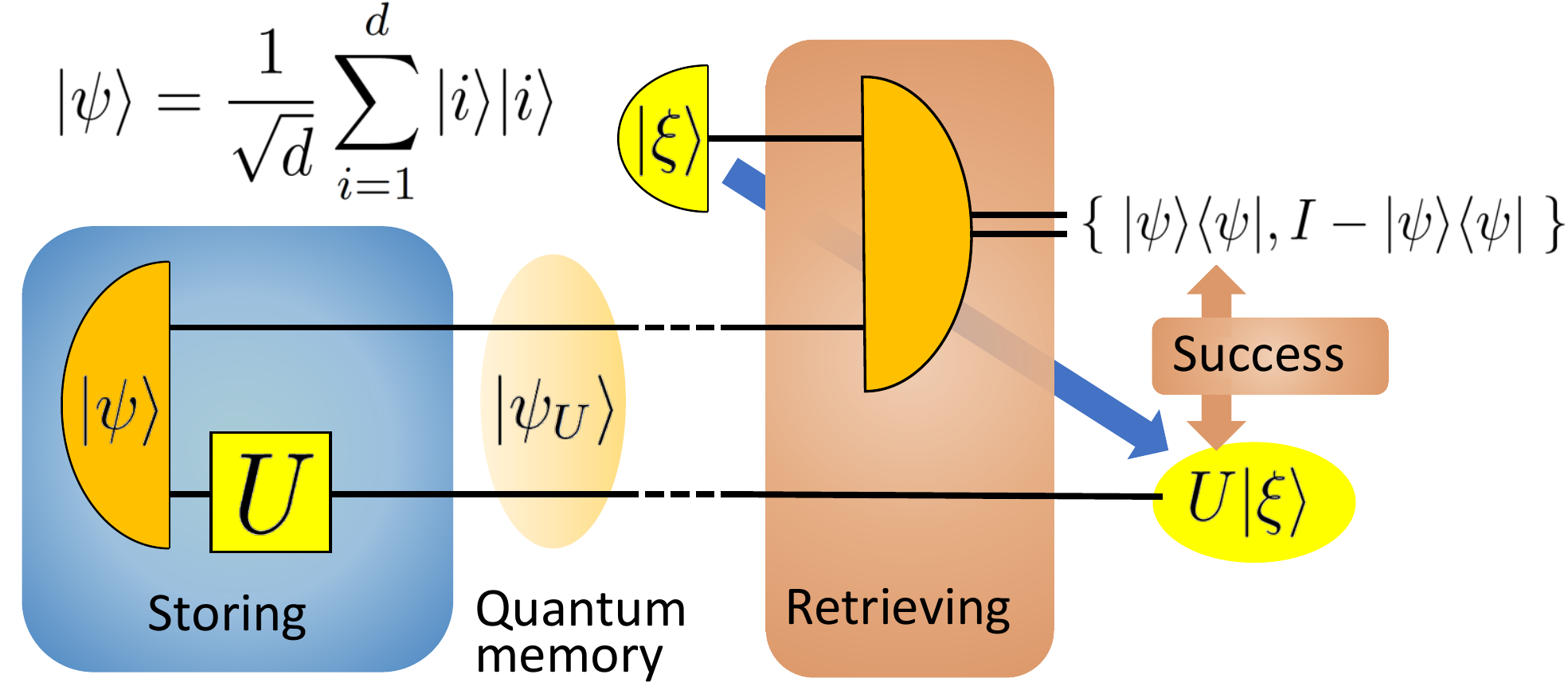}
\caption{Optimal $1\rightarrow 1$ PSAR of unitary channels.}
\label{fig:bs1}
  \end{center}
\end{figure}

Our goal is to investigate the probabilistic version of the SAR
problem, in particular, we aim to find the optimal $N\rightarrow 1$
probabilistic storage and retrieval procedure (PSAR). Moreover, we
require the retrieved channel to be implemented perfectly and with
the same probability of success (``covariance" property)
for all considered channels. We will design
the strategy maximizing the probability for the set of unitary
channels, i.e. $\chu(\xi)=U \xi U^\dagger$ for some unitary operator
$U$. Due to no-programming theorem \cite{nielsen1}, the
retrieving part of any PSAR strategy cannot  be deterministic. Thus, the
successful retrieval is described by a trace-non-increasing
completely positive linear map (quantum operation)
$\mathcal{T}_U:\mathcal{L}(\hilb{H})\to\mathcal{L}(\hilb{H})$
proportional to the unknown unitary channel, $\mathcal{T}_U = \lambda_U
\chu$. Consequently, the success probability is $\lambda_U={\rm
  tr}[\mathcal{T}_U(\xi)]$
and the condition of covariance
implies $\lambda_U=\lambda$ for all $U$.

\emph{One-to-one probabilistic storage-and-retrieval.}  In
such case the unknown unitary $U$ is applied on a suitably chosen
state $\ket{\psi}$ (in general bipartite and entangled), which yields
state $\ket{\psi_U}\in\hilb{H}_M$ and concludes the storing
phase. Afterwards, once we want to apply unitary $U$ on some state
$\xi$, we employ a retrieving quantum instrument
$\mathbf{R}=\{\mathcal{R}_s, \mathcal{R}_f \}$, which acts on
$\xi\otimes\ket{\psi_U}\bra{\psi_U}$ and in case of success outputs an
sub-normalized state $\lambda U\xi U^\dagger$,
i.e.
$\mathcal{R}_s:\mathcal{L}(\hilb{H}_{\rm
  in}\otimes\hilb{H}_M)\to\mathcal{L}(\hilb{H}_{\rm out})$ with
$\hilb{H}=\hilb{H}_{\rm in}=\hilb{H}_{\rm out}$. The retrieving
quantum instrument plays the role of a probabilistic programmable
processor and the state $\ket{\psi_U}$ programs a unitary
transformation $U$ to be performed on a state $\xi$.

Using the Choi isomorphism \cite{choi} we have that
$\mathcal{R}_s(\xi\otimes\ket{\psi_U}\bra{\psi_U})={\rm tr}_{{\rm in},M}[(I\otimes\xi^T\otimes\ket{\psi_U}\bra{\psi_U}^T)R_s]=\lambda{\rm tr}_{\rm in}[(I\otimes\xi^T)\Ket{U}\Bra{U}]=\lambda U\xi U^\dagger$, where
$R_s\in\mathcal{L}(\hilb{H}_{\rm out}\otimes\hilb{H}_{\rm in}\otimes\hilb{H}_M)$
and $\Ket{U}=\sqrt{d}(U\otimes I)\ket{\psi_+}$ with
$\ket{\psi_+}=d^{-1/2}\sum_j \ket{j}\otimes\ket{j}$ (vectors $\{\ket{j}\}$ form
an orthonormal basis of $\hilb{H}=\hilb{H}_{\rm in}=\hilb{H}_{\rm out}$).
Since the above identity must
hold for any $\xi$ and $\ket{\psi_U}\bra{\psi_U}^T=\ket{\psi^*_U}\bra{\psi^*_U}$ (both the transposition and the conjugation are defined with respect to the same basis of $\hilb{H}_M$) we obtain the following \emph{perfect retrieval condition}
\begin{align}
  \label{eq:case11composition}
  \begin{aligned}
    & \bra{\psi^*_U} R_s \ket{\psi^*_U} = \lambda \KetBra{U}{U}
    \quad\forall U\in SU(d)
    \,.
  \end{aligned}
\end{align}

Already this simple case shows that the maximization of
probability of success $\lambda$ involves the simultaneous optimization of
the storing phase (choice of $\ket{\psi}$) and the retrieving phase
(choice of quantum instrument $\mathbf{R}$). It turns out that the optimal
performance is achieved by the (incomplete) quantum teleportation
protocol \cite{teleportation} that is a known example of a universal
probabilistic quantum processor \cite{qbook}. Let us note that this is
similar to quantum gate teleportation invented by Gottesman and
Nielsen \cite{nielsen2}, yet it is different, because
PSAR must work perfectly for any unitary
transformation. In particular,
for the storing phase we set $\ket{\psi}=\ket{\psi_+}$.
Then the optimal retrieval is achieved by a quantum teleportation of state $\xi$ using the stored state $\ket{\psi_U}=d^{-1/2}\Ket{U}$ (see Fig.~\ref{fig:bs1}). The generalized Bell measurement performed on $\xi$ and one part of $\ket{\psi_U}$ results in an outcome $k$ with probability $1/d^2$. In such case we are left with the second part of $\ket{\psi_U}$ in the state $U\sigma_k\xi \sigma_k U^\dagger$, where $\sigma_k$ are generalized Pauli operators. In case of $\sigma_k=I$ (associated with the Bell measurement projection onto $\ket{\psi_+}$) the stored unitary channel is successfully retrieved. For all the other outcomes, the unwanted $\sigma_k$ rotation can not be undone, because the unitary $U$ is unknown. In conclusion, the teleportation-based PSAR succeeds with probability
$1/d^2$. Its optimality follows from our subsequent discussion of the
optimal $N\to 1$ PSAR.

\begin{figure}
  \begin{center}
    \includegraphics[width=8cm]{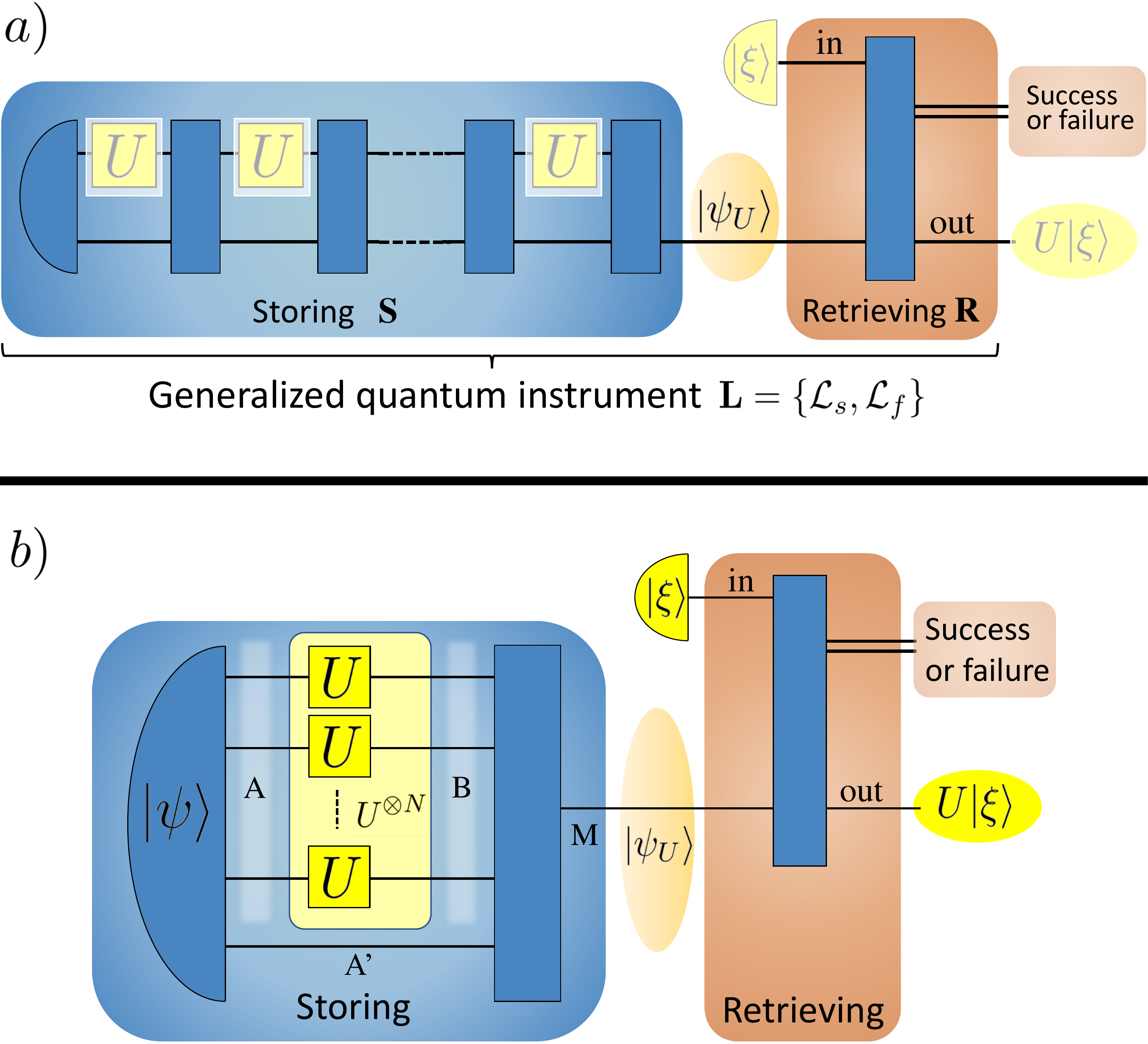}
    \caption{Illustration of $N\rightarrow 1$ PSAR
\emph{Top}: PSAR with the most general strategy. \emph{Bottom}:
      PSAR with parallel use of unitary channels.}
    \label{fig:2}
  \end{center}
\end{figure}

\emph{N-to-one probabilistic storage-and-retrieval.}  The general
PSAR strategy with $N$ uses of a channel in the storing phase involves
all combinations of their parallel, successive and adaptive processing
and corresponds to a quantum circuit with open slots, where the $N$
uses of a channel can be inserted. Such framework is described within the
theory of \emph{quantum networks} \cite{architecture,supermaps,comblong,actacomb} and any
quantum circuit with open slots is represented by a positive operator
(see \cite{supplement} for a short introduction).  The storing
network is
described by an operator ${S}$. It accepts $N$ channels
as its input and it outputs a memory state $\ket{\psi_U}\in\hilb{H}_M$
(see Fig.~\ref{fig:2}a). As in $1\to 1$ case the retrieving
phase is described by a two-valued instrument
$\mathbf{R}=\{\mathcal{R}_s,\mathcal{R}_f\}$. The overall action of
PSAR is a composition of $\mathcal{S}$ and $\mathbf{R}$ determining a
generalized quantum instrument
$\mathbf{L}=\{\mathcal{L}_s, \mathcal{L}_f \}$.  In the
Choi picture the input of PSAR corresponds to
$\Ket{U}\Bra{U}^{\otimes
  N}\in\mathcal{L}(\hilb{H}_A\otimes\hilb{H}_B)$ and
$L_s\in\mathcal{L}(\hilb{H}_A\otimes\hilb{H}_B\otimes\hilb{H}_{\rm
  out}\otimes\hilb{H}_{\rm in})$, where
$\hilb{H}_A=\hilb{H}_B=\hilb{H}^{\otimes N}$. The perfect retrieval
condition (similarly to Eq. (\ref{eq:case11composition})) is
\begin{align}
\label{eq:perfectlearcond}
\Bra{U^*}^{\otimes N} L_s \Ket{U^*}^{\otimes N}
=\lambda \KetBra{U}{U} \quad\quad \forall U \in SU(d),
\end{align}
where $\lambda$ gives the success probability.
Let us stress that the probability of success, i.e. the value of
$\lambda$, is required to be the same for all $U \in SU(d)$. Thanks to
this assumption we can without loss of generality apply the methods of
\cite{bisilearn} to conclude that the optimal storing phase is
\emph{parallel} as illustrated in Fig.~\ref{fig:2}b.  Consider the
decomposition
$U^{\otimes N} = \bigoplus_{j\in {\rm Irr}(U^{\otimes N})} U_j \otimes
I_{m_j}$ into irreducible representations (IRRs), where $U_j$
is a unitary operator on $\hilb{H}_j$ and $I_{m_j}$ denotes the identity
operator on the multiplicity space.
This corresponds to the following decomposition of the Hilbert space
$\hilb{H}_A := \bigoplus_{j\in {\rm Irr}(U^{\otimes N})} \hilb{H}_j
\otimes \hilb{H}_{m_j}$, and we set $d_j=\dim(\hilb{H}_j)$. The result
of \cite{bisilearn} implies that the memory state $\ket{\psi}$ can be
taken of the following form
\begin{align}
  \label{eq:optstate}
  \begin{aligned}
  \ket{\psi} :=  \bigoplus_{j} \sqrt{\frac{p_j}{d_j} } \Ket{I_j} \in
  {\mathcal{H}}_M \quad\quad
p_j \geq 0, \; \sum_j p_j =1\,,
  \end{aligned}
\end{align}
where $I_j$ denotes the identity operator on $\hilb{H}_j$ and
${\hilb{H}}_M := \bigoplus_{j\in {\rm Irr}(U^{\otimes N})} \hilb{H}_j\otimes \hilb{H}_j
\subseteq \hilb{H}_A\otimes \hilb{H}_{A'} $. The state $\ket{\psi}$
undergoes the action of the unitary channels and becomes
$\ket{\psi_U} :=\bigoplus_j \sqrt{\frac{p_j}{d_j} } \Ket{U_j}$.
Clearly, $\ket{\psi_U}\in\hilb{H}_M$ for any $U$.

Let us now focus on the retrieving quantum instrument
$\mathbf{R}$ from $\mathcal{L}(\hilb{H}_{\rm in}\otimes\hilb{H}_M)$
to $\mathcal{L}(\hilb{H}_{\rm out})$, where in/out labels the
system on which the retrieved channel is applied. The perfect
retrieval condition is again given by Eq. (\ref{eq:case11composition})
with $\ket{\psi^*_U} = \bigoplus_j \sqrt{\frac{p_j}{d_j} }\Ket{U^*_j}$.
As a consequence of Eq.~(\ref{eq:perfectlearcond}) the optimal
Choi operator $R_s$ can be chosen to satisfy
the commutation relation
\begin{align}
  \label{eq:commretriev}
  \left [ R_s,U'^* V' \otimes U_{\rm in}\otimes V^*_{\rm out}   \right ]=0,
\end{align}
where $U' := \bigoplus_j {U}_j \otimes I_{j}$, $V' := \bigoplus_j {I}_j \otimes V_{j}$.
Thanks to Eq.~\eqref{eq:commretriev}, $U'\ket{\psi}= \ket{\psi_U}$ and
$\ket{\psi^*_I}=\ket{\psi}$
the perfect retrieval condition becomes
\begin{align}
  \label{eq:simplifiedlambda}
  \begin{aligned}
    \bra{\psi} R_s \ket{\psi}= \lambda \KetBra{I}{I}
  \end{aligned}
\end{align}
and the success probability reads
$\lambda=\frac{1}{d^2}\Bra{I}\bra{\psi} R_s \ket{\psi} \Ket{I}$.
Let us now consider the decomposition
\begin{align}
  \label{eq:decompopartial}
  \begin{aligned}
    U^*_j \otimes U = \bigoplus_{J\in {\rm Irr}(U^*_j \otimes U)} U_J \otimes I_{m^{(j)}_{J}},
\end{aligned}
\end{align}
which induces the Hilbert space decomposition $\hilb{H}_j \otimes \hilb{H} = \bigoplus_{J\in {\rm Irr}(U^*_j \otimes U)} \hilb{H}_J \otimes \hilb{H}_{m^{(j)}_{J}}$.
Let us denote by $\mathsf{j}_{JK}$ the set of values of
$j$ such that $ U_J\otimes V_K$ is in the decomposition of
$U^*_j\otimes V_j \otimes U \otimes V^*$.
Using Eqs. (\ref{eq:commretriev}) and (\ref{eq:decompopartial})
we can assume \cite{supplement} that
 $R_s = \bigoplus_{J} I_J \otimes I_J \otimes s^{(J)}$,
where $s^{(J)} := \sum_{j,j' \in \mathsf{j}_{JJ}} s^{(J)}_{jj'}
\KetBra{I_{m_J^{(j)}}}{I_{m_J^{(j')}}}$.  Given this the left
hand side of Eq. (\ref{eq:simplifiedlambda}) reads
\begin{align}
  \label{eq:nuJ}
  \bra{\psi}    R_s \ket{\psi} =\sum_J  \lambda_J \KetBra{I}{I} + \nu_J
  \left(  I - \tfrac{1}{d} \KetBra{I}{I}\right)\,,
\end{align}
where $\nu_J$ are specified in \cite{supplement},
$\lambda_{J} = \frac{d_J}{d^2} \bra{\phi_J}s^{(J)}\ket{\phi_J}$ and
$\ket{\phi_J}=\bigoplus_{j\in \mathsf{j}_{JJ}} \sqrt{\frac{p_j}{d_j}}
\Ket{I_{m_J^{(j)}}}$. Since $R_s \geq 0 $, the perfect
learning condition of
Eq.~\eqref{eq:simplifiedlambda} holds only if $\nu_J=0$ for all $J$.
Then, the success probability is
$\lambda=\sum_J \lambda_J$.  The following result translates the
optimisation of $\lambda$ from an operator optimisation problem into a
linear program.
\begin{theorem}
  For optimal PSAR the success probability $\lambda$ is given by
  the following linear programming problem:
 \begin{align}
  \label{eq:optimizationpjmj}
    & \underset{\mu_J, p_j}{\mbox{\rm maximize}}
& & \lambda =\sum_{J\in\usbl} d_J^3 \mu_{J},
\\
& \mbox{\rm subject to}
&&0 \leq d_J \mu_J  \leq \frac{p_j}{d_j^2} \quad \forall j \in \mathsf{j}_{JJ}  \;\;\;\forall J\in \usbl  \nonumber  \\
 &&& p_j \geq 0 \quad\quad  \sum_{j\in {\rm Irr}(U^{\otimes N})}
 p_j=1\; , \nonumber
 \end{align}
 where $\usbl=\{J\in {\rm Irr}(U^{\otimes N}\otimes U^*) | d d_J =
 \sum_{j\in\mathsf{j}_{JJ}}d_j \}$.
\end{theorem}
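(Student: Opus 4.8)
The plan is to use the block form $R_s=\bigoplus_J I_J\otimes I_J\otimes s^{(J)}$ dictated by Eq.~(\ref{eq:commretriev}) to reduce the operator optimisation to a scalar one, one irrep $J$ at a time, and then to recognise the feasible region as the polytope in Eq.~(\ref{eq:optimizationpjmj}). Since the decomposition is a direct sum, $R_s\ge0$ is equivalent to $s^{(J)}\ge0$ for every $J$; moreover, because $U^*_j\otimes U$ is multiplicity free (Pieri's rule for $SU(d)$), each $\hilb{H}_{m_J^{(j)}}$ is one–dimensional, so $s^{(J)}$ is simply a positive semidefinite matrix with entries $s^{(J)}_{jj'}$ indexed by $j,j'\in\mathsf{j}_{JJ}$, while $\ket{\phi_J}$ is the vector with components $(\phi_J)_j=\sqrt{p_j/d_j}$.

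Next I would impose the two remaining constraints — the instrument normalisation ${\rm tr}_{\rm out}R_s\le I_{{\rm in},M}$ and the perfect‑retrieval condition $\nu_J=0$ of Eq.~(\ref{eq:nuJ}) — by computing the relevant partial traces with Schur's lemma. The $SU(d)\times SU(d)$ representations carried by $\hilb{H}_{\rm in}\otimes\hilb{H}_M$ (under $U'^*V'\otimes U_{\rm in}$) and by $\hilb{H}_M$ (under $U'^*V'$) are multiplicity free, so ${\rm tr}_{\rm out}R_s$ and ${\rm tr}_{{\rm in},{\rm out}}R_s$ are block scalars; the computation should give ${\rm tr}_{\rm out}R_s$ eigenvalue $\tfrac{d_J}{d_j}s^{(J)}_{jj}$ on the $(J,j)$ sector and ${\rm tr}_{{\rm in},{\rm out}}R_s=\bigoplus_j\big(\sum_{J:\,j\in\mathsf{j}_{JJ}}\tfrac{d_J^2}{d_j^2}s^{(J)}_{jj}\big)I_{\hilb{H}_j}\otimes I_{\hilb{H}_j}$. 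Hence normalisation becomes the diagonal bounds
\begin{equation}
\label{eq:diagbound}
0\le s^{(J)}_{jj}\le\frac{d_j}{d_J}\qquad\forall j\in\mathsf{j}_{JJ},
\end{equation}
and, evaluating $\bra{\psi}R_s\ket{\psi}$ with the explicit $\ket{\psi}$ and comparing with Eq.~(\ref{eq:nuJ}), the requirement $\nu_J=0$ collapses to the single linear relation
\begin{equation}
\label{eq:nuzeroreln}
d\,d_J\sum_{j\in\mathsf{j}_{JJ}}\frac{p_j}{d_j^2}\,s^{(J)}_{jj}=\bra{\phi_J}s^{(J)}\ket{\phi_J},
\end{equation}
while the figure of merit is $\lambda_J=\tfrac{d_J}{d^2}\bra{\phi_J}s^{(J)}\ket{\phi_J}$.

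The hard part will be showing that Eqs.~(\ref{eq:diagbound}) and~(\ref{eq:nuzeroreln}) together with $s^{(J)}\ge0$ pin $s^{(J)}$ down to a rigid rank‑one form and force $J\in\usbl$, which I would do by a Cauchy--Schwarz squeeze. Positivity gives $\bra{\phi_J}s^{(J)}\ket{\phi_J}\le\big(\sum_j\sqrt{p_j/d_j}\,\sqrt{s^{(J)}_{jj}}\big)^2$, with equality only when $s^{(J)}=\ketbra{r_J}{r_J}$ is rank one with $(r_J)_j=\sqrt{s^{(J)}_{jj}}$; Cauchy--Schwarz with weights $d_j$ gives $\big(\sum_j\sqrt{p_j/d_j}\,\sqrt{s^{(J)}_{jj}}\big)^2\le\big(\sum_{j\in\mathsf{j}_{JJ}}d_j\big)\sum_j\tfrac{p_j}{d_j^2}s^{(J)}_{jj}\le d\,d_J\sum_j\tfrac{p_j}{d_j^2}s^{(J)}_{jj}$, the last step being $\sum_{j\in\mathsf{j}_{JJ}}d_j\le\dim(U^*_J\otimes U)=d\,d_J$ (Pieri once more: $j\in\mathsf{j}_{JJ}$ forces $U_j\subset U^*_J\otimes U$ with unit multiplicity). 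Chaining these with Eq.~(\ref{eq:nuzeroreln}) forces every inequality to be an equality: $s^{(J)}$ is rank one, $s^{(J)}_{jj}\propto d_j^3/p_j$ throughout $\mathsf{j}_{JJ}$, and $\sum_{j\in\mathsf{j}_{JJ}}d_j=d\,d_J$, i.e.\ $J\in\usbl$; for $J\notin\usbl$ one is forced to $s^{(J)}=0$ and $\lambda_J=0$. Parametrising the remaining scale as $\mu_J\ge0$ through $s^{(J)}_{jj}=d_j^3\mu_J/p_j$, the bound~(\ref{eq:diagbound}) becomes exactly $d_J\mu_J\le p_j/d_j^2$ for all $j\in\mathsf{j}_{JJ}$, and $\sum_{j\in\mathsf{j}_{JJ}}d_j=d\,d_J$ gives $\bra{\phi_J}s^{(J)}\ket{\phi_J}=\mu_J\big(\sum_{j\in\mathsf{j}_{JJ}}d_j\big)^2=\mu_J\,d^2 d_J^2$, hence $\lambda_J=d_J^3\mu_J$. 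Summing $\lambda=\sum_{J\in\usbl}d_J^3\mu_J$ and appending $p_j\ge0$, $\sum_{j\in{\rm Irr}(U^{\otimes N})}p_j=1$ yields the linear program~(\ref{eq:optimizationpjmj}); conversely, the rank‑one $s^{(J)}$ just built satisfies all constraints and, since ${\rm tr}_{\rm out}R_s\le I_{{\rm in},M}$, completes to a legitimate retrieving instrument, so the program value is attained and the optimal $\lambda$ equals it.
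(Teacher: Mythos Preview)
Your argument is correct and arrives at the same linear program by the same overall architecture as the paper: decompose $R_s$ into irrep blocks, impose $\nu_J=0$ and $\Tr_{\rm out}R_s\le I$ block by block, and read off the scalar constraints. The one genuine methodological difference is in how you force $s^{(J)}$ to be rank one of the specific form and how you dispose of $J\notin\usbl$. The paper does this in two separate pieces: first a Young--diagram argument showing that $J\notin\usbl$ means $J=|j$ for a unique $j$, so Eq.~(\ref{eq:nuJequal0}) is a $1\times1$ equation forcing $s^{(|j)}_{jj}=0$; then, for $J\in\usbl$, an auxiliary lemma (Lemma~1 in the supplement) on positive matrices satisfying $\sum_{jj'}X_{jj'}=\sum_j X_{jj}/c_j$ with $\sum_j c_j=1$, whose proof is itself a somewhat indirect Cauchy--Schwarz argument. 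Your single squeeze
\[
d\,d_J\sum_j \tfrac{p_j}{d_j^2}s^{(J)}_{jj}
\;=\;\bra{\phi_J}s^{(J)}\ket{\phi_J}
\;\le\;\Bigl(\sum_j d_j\Bigr)\sum_j \tfrac{p_j}{d_j^2}s^{(J)}_{jj}
\;\le\; d\,d_J\sum_j \tfrac{p_j}{d_j^2}s^{(J)}_{jj}
\]
handles both cases at once (strict inequality in the last step forces the middle sum to vanish, hence $\lambda_J=0$ for $J\notin\usbl$; equality throughout forces rank one and $\sum_j d_j=d\,d_J$). This is tidier for proving Theorem~1 in isolation. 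What the paper's route buys is the structural identification $\usbl={\rm Irr}(U^{\otimes N-1})$ as a byproduct of the Young--diagram analysis, which is then reused in the proof of Theorem~2; your argument gives the abstract definition of $\usbl$ but not this combinatorial description. One small caveat: your equality case in the first inequality (``$s^{(J)}$ rank one'') only pins down entries $s^{(J)}_{jj'}$ with $p_j,p_{j'}>0$, but this is harmless since $p_j=0$ for some $j\in\mathsf{j}_{JJ}$ already forces $\mu_J=0$ in the LP.
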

\begin{proof}
  We will sketch only the key steps. The complete proof is in \cite{supplement}.
  First, one shows that  $J\notin \usbl$ implies $s^{(J)}=0$.
  Then,  (for any  $J\in \usbl$)  $\nu_J=0$ and $s^{(J)}\geq 0$ imply that
  $\sqrt{{p_j p_{j'}}} s^{(J)}_{jj'} =\mu_{J} \sqrt{d^3_j d^3_{j'}}$
  for some $\mu_{J}\geq 0 $.
Thus, 
$\lambda=\sum_{J\in \usbl} \sum_{j,j' \in \mathsf{j}_{JJ}} \frac{d_J \mu_{J}}{d^2} d_j d_{j'} =
  \sum_{J\in\usbl} d_J^3 \mu_{J}$.
The constraint that $\mathcal{R}_s$ is a quantum operation gives ${\rm tr}_{\rm out}[R_s]\leq I$.
Eq.~\eqref{eq:commretriev} implies $ \left[{\rm tr}_{\rm out}[R_s], U'V' \otimes U^*_{\rm in} \right]=0$ and ${\rm tr}_{\rm out}[R_s] = \bigoplus_{J}\bigoplus_{j\in \mathsf{j}_{JJ}} I_J \otimes I_j \,\frac{d_J}{d_j}\, s^{(J)}_{jj}$.
Thus, 
  $d_J \mu_J \frac{d_j^2}{p_j}\leq 1$ must hold for all $J$ and $j\in  \mathsf{j}_{JJ}$. Conditions on $p_j$ are from Eq. (\ref{eq:optstate}).
\end{proof}

\emph{Case study: $N\rightarrow 1$ PSAR for qubit channels.}
In case of qubit ($d=2$)  the decomposition of  $U^{\otimes N}$
into IRRs of $SU(2)$ reads
$U^{\otimes N}=\bigoplus_{j=(N \mod 2)/2}^{N/2} U_j \otimes I_{m_j}$,
where $m_j=\frac{2j+1}{N/2+j+1} \binom{N}{N/2+j}$ \cite{su2decomp} and
$U_j$ are the IRRs of spin $j$ with dimension $d_j=2j+1$.
For convenience we work with even $N$ (for odd $N$ see
\cite{supplement}), so $j=0,1,\ldots,N/2$. For
$SU(2)$ the complex conjugate representation
$U_j^*$ is equivalent to IRR $U_j$.
Thus, in Eq. (\ref{eq:decompopartial}) we get either $J=j+1/2$ or
$J=j-1/2$. Altogether, $J$ can have values $J\in \usbl =\{1/2, 
\ldots, (N-1)/2\}$ or $J=(N+1)/2\notin
\usbl$, because $\sum_{j\in\mathsf{j}_{JJ}}d_j=d_{J-1/2}+d_{J+1/2}=d
d_J$ and $d_{N/2}\neq 2 d_{(N+1)/2}$. The
constraints in Eq. (\ref{eq:optimizationpjmj}) imply for any
$j$ but $j=0,N/2$ the following two inequalities
\begin{align}
  \label{eq:ineq1}
  \mu_{j+1/2}\; d_j^2 d_{j+1/2} \leq p_j\,, \\
  \label{eq:ineq2}
  \mu_{j-1/2}\; d_j^2 d_{j-1/2} \leq p_j\,.
\end{align}
For $j=0,
N/2$ only one of them exists. Let us define $f_j\in
[0,1]$ for $j=0, \ldots
,\frac{N}{2}$ as
$f_j=\frac{1}{2}\frac{2j}{2j+1}\left(\frac{2j+2}{N}+1\right)$. Since
$f_0=0$ and
$f_{N/2}=1$ we can multiply Eq. (\ref{eq:ineq1}) by
$1-f_j$ and Eq. (\ref{eq:ineq2}) by
$f_j$, and take the sum for all
$j$.  A straightforward calculation gives the upper bound:
\begin{align}
\label{eq:mainub}
\frac{N+3}{N} \sum_{J=\frac12}^{\frac{N-1}{2}} d_J^3 \mu_J \leq 1
  \quad \Leftrightarrow \quad
\lambda \leq \frac{N}{N+3}.
\end{align}
Finally, by choosing $p_j=(2j+1)^2/L$,
$\mu_{j+1/2}=1/(L(2j+2))$ (where
$L=(N+1)(N+2)(N+3)/6$), one proves that that conditions in
Eq. (\ref{eq:optimizationpjmj}) are satisfied and the upper
bound (\ref{eq:mainub}) is achieved.
The knowledge of
$\mu_J$ and
$p_j$ completely specifies the state
$\ket{\psi}$ and the retrieving operation
$\mathcal{R}_s$ which can be explicitly expressed (see Fig. \ref{fig:2}b).
Let $\ket{j,j_z}\in \hilb{H}_j$ with
$j_z\in\{-j,\dots,j\}$ be an orthonormal basis of the spin $j$
IRR. By definition
$\Ket{I_j}=\sum_{j_z=-j}^j\ket{j,j_z}\otimes\ket{j,j_z}$. Consequently,
 from Eq.~\eqref{eq:optstate},
the dimension of the quantum memory is $\dim{\hilb{H}_M} =\sum_{j=0}^{N/2} d_j^2 = L$ and the optimal input state for storage is
$\ket{\psi}=  \bigoplus_{j=0}^{N/2} \sqrt{\frac{2j+1}{L}} \Ket{I_j}$.


\emph{Optimal PSAR for qudit unitary transformations.}
The optimization of $N\to 1$ PSAR of qudit channels
follows similar steps as for the qubit case and it exploits a
combinatorial identity (Proposition $3$ in~\cite{sedlak2018})
which was discovered and proved as a byproduct of this analysis.

\begin{theorem}
\label{thm1}
The optimal probability of success of $N\to 1$ probabilistic storage
and retrieval of a 
unitary channel $\chu(.)=U . U^\dagger$, $U\in SU(d)$ equals
$\lambda=N/(N-1+d^2)$.
The optimal state
for storage is
$
  \ket{\psi} :=  \bigoplus_{j }
  \sqrt{\frac{d_j}{L} } \Ket{I_j} \in  {\mathcal{H}}_M$
where
$ L := \sum_{j} d_j^2
$
and $ j\in {\rm
  Irr}(U^{\otimes N})$.
\end{theorem}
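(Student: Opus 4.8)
By Theorem~1 the optimal success probability of $N\to1$ PSAR equals the value of the linear program~\eqref{eq:optimizationpjmj}, so the plan is to bound that value both from below and from above by $N/(N-1+d^2)$. Both bounds rest on a single combinatorial input about the dimensions $d_j$ ($j\in{\rm Irr}(U^{\otimes N})$) and $d_J$ ($J\in\usbl$) --- the identity of Proposition~$3$ in~\cite{sedlak2018} --- which plays here the role that the explicit weights $f_j$ played in the qubit derivation~\eqref{eq:mainub}.

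\emph{Achievability.} First I would substitute the candidate state of the theorem, i.e.\ put $p_j=d_j^2/L$ with $L=\sum_{j\in{\rm Irr}(U^{\otimes N})}d_j^2$; this is a probability distribution because $\sum_j p_j=1$ by the very definition of $L$. For this choice the constraint $0\le d_J\mu_J\le p_j/d_j^2$ collapses to $0\le d_J\mu_J\le 1/L$ \emph{independently of} $j\in\mathsf{j}_{JJ}$, so the best admissible value is $\mu_J=1/(d_J L)$ for every $J\in\usbl$, giving $\lambda=\sum_{J\in\usbl}d_J^3\mu_J=\tfrac1L\sum_{J\in\usbl}d_J^2$. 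It then remains to invoke the dimension identity
\[
\sum_{J\in\usbl}d_J^2=\frac{N}{N-1+d^2}\,\sum_{j\in{\rm Irr}(U^{\otimes N})}d_j^2=\frac{N}{N-1+d^2}\,L ,
\]
which is a rearrangement of Proposition~$3$ of~\cite{sedlak2018}; this yields $\lambda\ge N/(N-1+d^2)$ and at the same time exhibits $\ket{\psi}=\bigoplus_j\sqrt{d_j/L}\,\Ket{I_j}$ as an optimal storage state.

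\emph{Optimality.} For the matching upper bound I would mimic~\eqref{eq:mainub}: produce non-negative weights $c_{jJ}$, indexed by $J\in\usbl$ and $j\in\mathsf{j}_{JJ}$, such that (i)~$\sum_{J:\,j\in\mathsf{j}_{JJ}}c_{jJ}=1$ for every $j$, and (ii)~$\sum_{j\in\mathsf{j}_{JJ}}c_{jJ}\,d_j^2=\tfrac{N-1+d^2}{N}\,d_J^2$ for every $J\in\usbl$. Multiplying the constraint $d_J d_j^2\mu_J\le p_j$ by $c_{jJ}$ and summing over all admissible pairs $(j,J)$, the left-hand side becomes $\sum_{J}(d_J\mu_J)\sum_{j\in\mathsf{j}_{JJ}}c_{jJ}d_j^2=\tfrac{N-1+d^2}{N}\sum_J d_J^3\mu_J=\tfrac{N-1+d^2}{N}\,\lambda$, while the right-hand side becomes $\sum_j p_j\sum_{J}c_{jJ}=\sum_j p_j=1$, hence $\lambda\le N/(N-1+d^2)$. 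This is exactly weak LP duality: the $c_{jJ}$ are, up to the common factor $\tfrac{N-1+d^2}{N}$, the dual variables attached to the inequality constraints of~\eqref{eq:optimizationpjmj}. Their existence is again supplied by Proposition~$3$ of~\cite{sedlak2018}: summing (i) over $j$ and (ii) over $J$ reproduces precisely the dimension identity above, and the combinatorial structure of that identity lets one realise the prescribed marginals by a non-negative weighting supported on the branching graph $j\leftrightarrow J$ (for $d=2$ this graph is a path and $c_{j,j\pm1/2}$ are the numbers $f_j$, $1-f_j$).

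\emph{Main obstacle.} All the operator-theoretic work is already carried out in Theorem~1, so the entire remaining difficulty is combinatorial and has two interlocked parts: (a) describing $\usbl$ explicitly --- it consists of all but certain extremal elements of ${\rm Irr}(U^{\otimes N}\otimes U^*)$, namely the $J$ for which the Pieri-type equality $d\,d_J=\sum_{j\in\mathsf{j}_{JJ}}d_j$ holds; and (b) establishing the dimension identity $\sum_{J\in\usbl}d_J^2=\tfrac{N}{N-1+d^2}L$ \emph{together with} the non-negativity (Hall-type feasibility) of the balancing weights $c_{jJ}$, which --- as the qubit condition $f_j\in[0,1]$ already shows --- does not follow from the identity alone. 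Both are settled by Proposition~$3$ of~\cite{sedlak2018} and the detailed computation in~\cite{supplement}.
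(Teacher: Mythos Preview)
Your plan matches the paper's proof: the same feasible point $p_j=d_j^2/L$, $\mu_J=1/(d_JL)$ for achievability, and the same weighted-summation (LP-dual) argument for the upper bound. The one thing you leave open---the existence of non-negative weights $c_{jJ}$ with the marginals (i)--(ii)---the paper resolves not by a Hall-type feasibility argument but by writing them down explicitly: $c_{jJ}=\widetilde d_J/\widetilde d_j$, the ratio of the dimensions of the symmetric-group irreps attached to the Young diagrams $J$ and $j$. With this choice non-negativity is automatic, your condition (i) is the standard branching rule $\widetilde d_j=\sum_{J\in\usbl_j}\widetilde d_J$, and your condition (ii) becomes the identity $\sum_{j\in\mathsf j_{JJ}}d_j^{\,2}/\widetilde d_j=\tfrac{N-1+d^2}{N}\,d_J^{\,2}/\widetilde d_J$, which is precisely what the hook-content identity (Proposition~3 of~\cite{sedlak2018}) is used to prove. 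For part (a) of your obstacle, the supplement shows $\usbl={\rm Irr}(U^{\otimes N-1})$ already inside the proof of Theorem~1; once that is in hand the achievability ratio $\sum_{J\in\usbl}d_J^{\,2}/L=N/(N-1+d^2)$ also follows directly from Schur's formula $\sum_j d_j^{\,2}=\binom{N+d^2-1}{N}$, without invoking the combinatorial identity a second time.
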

The proof is given in \cite{supplement}.  Clearly,
as $N$ goes to infinity $\lambda \sim 1 - \frac{d^2-1} {N}$, and
$\lambda\approx \frac12$ implies $N\approx d^2$. Reminding that a
$d$--dimensional unitary transformation has $d^2$ parameters, we see
that roughly one use per unknown parameter is needed for reliable storage and retrieval of the transformation.
Let us note that
the storage state in Theorem~\ref{thm1} is optimal also for the
estimation of a group transformation in the maximum likelihood
approach \cite{maxlikeest}. Further, it is worth to stress that
the optimal PSAR protocol is achieved by a coherent retrieval,
hence, the quantum memory is essential.
In contrast, optimal approximate SAR \cite{bisilearn} is equivalent to quantum estimation in the maximum fidelity approach and classical memory
is sufficient as an output of the storing phase. 
Use of the optimal storage state in the design of an approximate SAR leads to fidelity that scales as
$1-O(N^{-1})$, however, for the optimal approximate
SAR the fidelity scales as $1-O(N^{-2})$ \cite{bisilearn}.
This $O(N)$ difference is the price to pay for the
perfect retrieval in case of PSAR.


{\it Alignment of reference frames \cite{refframes}.} (ARF)
Let us note that the correction of alignment errors can be
modeled as a PSAR protocol in which $N$ uses of an unknown $\chu$ are
stored and the aim is to retrieve the inverse transformation
$\chu^{\dag}$.  For $SU(2)$, we can show that, given
$N$ uses of $\chu$,  the inverse transformation $\chu^{-1}$
can be perfectly retrieved with the same optimal probability of
success $\lambda$ (see Fig. \ref{fig:realignement} and \cite{supplement}).
It follows that the success probability of the probabilistic ARF
protocol \cite{refframes} achieves the optimal scaling $O(N^{-1})$
(see \cite{supplement}).


\begin{figure}
  \begin{center}
    \includegraphics[width=7cm]{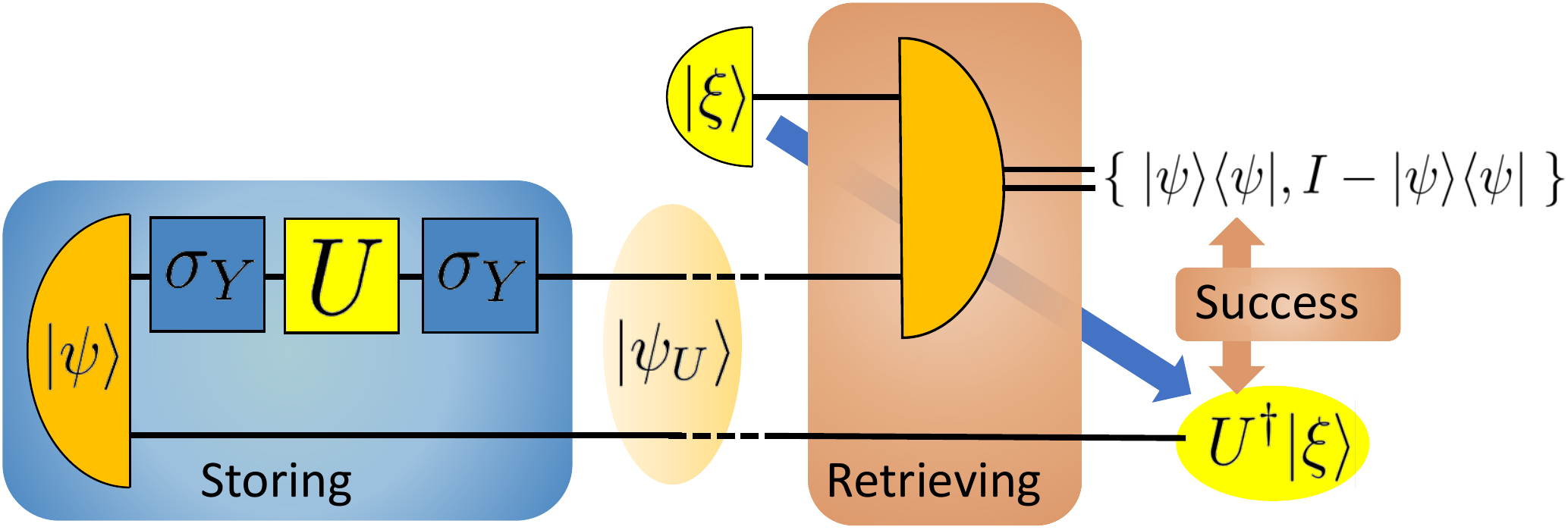}
    \caption{A modified optimal $1\rightarrow 1$ PSAR
      in which $\chu$ is stored and the inverse transformation
      $\chu^{\dag}$ is retrieved ($SU(2)$ case).
    The generalisation to the $N\rightarrow 1$ is straightforward. }
    \label{fig:realignement}
  \end{center}
\end{figure}

\emph{Probabilistic port-based teleportation.} (PPBT) As
the first step of PPBT \cite{portIshizaka1} Alice and Bob share
$N$ suitably entangled pairs of quantum systems.
Their goal is to teleport an unknown state
$\xi$ to Bob in a way that this state appears in one of his systems
(called ports \cite{portishizaka0,portstrelchuk2}). 
In order to achieve this goal (see also Fig.~\ref{fig:reltoportbased})
Alice performs a specific measurement
resulting in $n\in\{0,1,\dots,N\}$ ($0$ labels the failure of the protocol),
communicates this information to Bob who selects the system from
the $n$th port to accomplish the teleportation. If
Bob applies a channel $\chu$ on each of his ports (storing phase) and
Alice starts the teleportation (retrieving phase) of $\xi$ afterwards,
the $n$th port will output $\chu(\xi)$. Strictly speaking, we swap
$n$th port into a fixed quantum system and effectively we achieve
$N\rightarrow 1$ PSAR. Let us stress that while any PPBT protocol can
be turned into a PSAR protocol, the converse does not hold. In a
sense, PPBT scheme provides a structurally simple realization of an
optimal PSAR protocol.  Our results show that the optimal probability
of PPBT \cite{portStrelchuk1} coincides with the optimal success
probability of PSAR. However, the memory dimension $\dim\hilb{H}_M$
of the optimal PSAR is exponentially smaller (see the following
paragraph) in comparison with $2N$ qudits used in PPBT construction.
\begin{figure}
  \begin{center}
    \includegraphics[width=6cm]{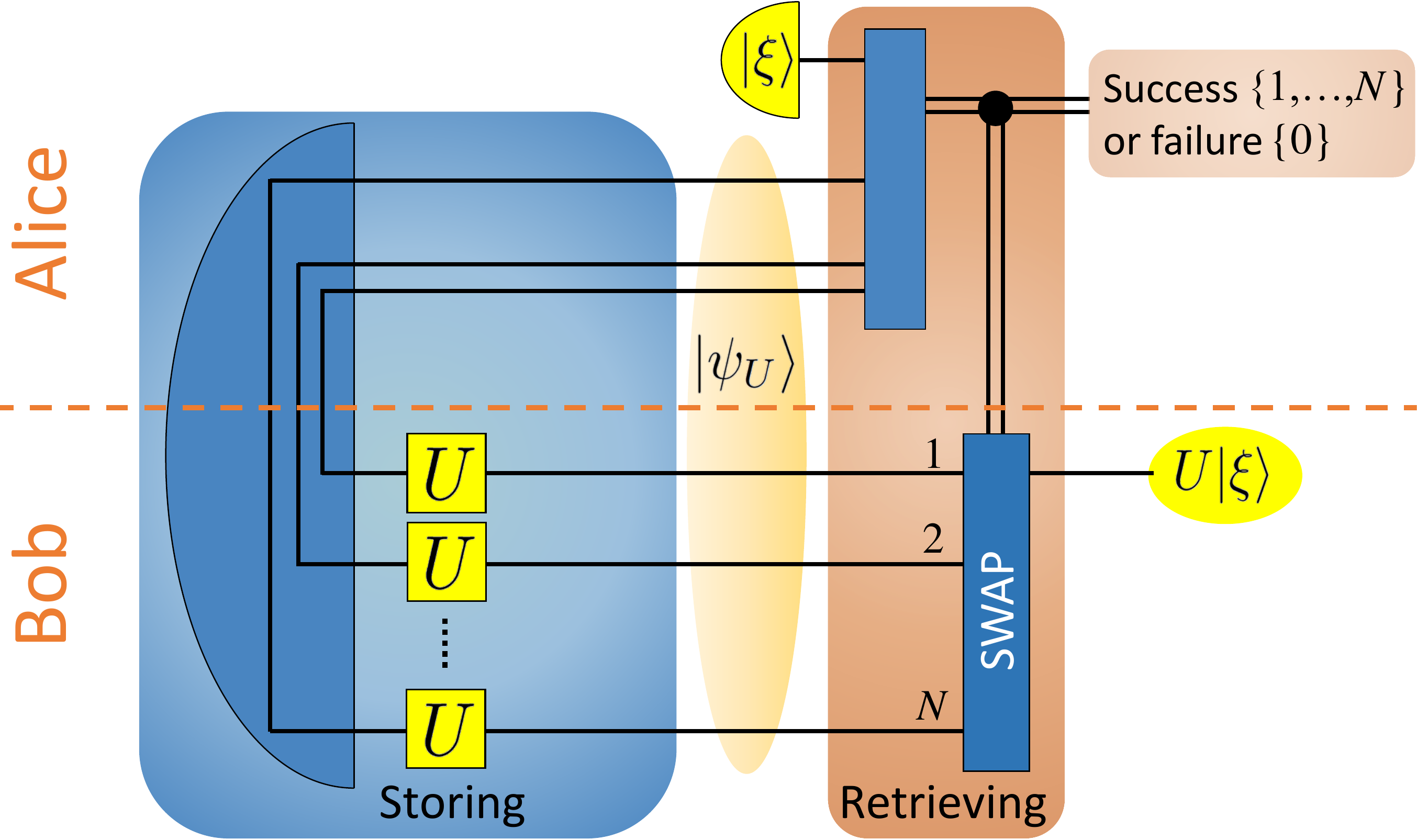}
    \caption{Use of port-based teleportation scheme for PSAR.}
    \label{fig:reltoportbased}
  \end{center}
\end{figure}

\emph{Implications for covariant probabilistic programmable
  processors.}
Up to now the best bound on the size of the program register for
universal covariant probabilistic processors was provided by family of
PPBT processors for which $\dim\hilb{H}_M \approx (d^{2(d^2-1)})^{1/f}$, where
$f=1-\lambda$ is the failure probability.
In contrast, the retrieving phase of optimal $N\rightarrow 1$ PSAR
defines a class of processors for which the program register size reads
$\dim \hilb{H}_M=\sum_{j\in {\rm Irr}(U^{\otimes N})} d_j^2 =
\binom{N+d^2-1}{N}$, where we used Schur's result \cite{schurthesis}.
In terms of the failure probability
it reads $\dim \hilb{H}_M \propto(1/f)^{(d^2-1)}$,
which is exponentially smaller (for fixed $d$ and $f\to 0$)
in comparison with PPBT-based processors. This result
can be viewed as a quantification of achievable tradeoffs imposed by
the no-programming theorem \cite{nielsen1} on universal covariant probabilistic
processors. Although PSAR provides only an upper bound on the size of
the program register, we conjecture that the lower bound will have the
same scaling. However, this question remains open.

\emph{Summary.}  We showed that optimal probabilistic
storage-and-retrieval of unknown unitary channels on $d$-dimensional
quantum systems can be designed with success probability
$\lambda=N/(N-1+d^2)$, where $N$ is the number of uses of the channel
in the storing phase.  This probability coincides with the success
probability for probabilistic port-based teleportation
\cite{portStrelchuk1}, and, for the $SU(2)$ case, with the
probability of success for probabilistic alignment of reference
frames. Optimal PPBT can be rephrased as an optimal
protocol for PSAR, but for the PSAR protocol designed here the
storing memory system is exponentially smaller and optimal in this
parameter. On the other hand, $N\to 1$ PPBT-based
PSAR implements all quantum channels (not only unitary ones), thus,
its performance is universal. The question of potential reduction of
memory system while keeping the universality for all channels remains
open.

\begin{acknowledgments}
MS and MZ
acknowledge the support by the QuantERA project HIPHOP (project ID 731473), projects QETWORK (APVV-14-0878), MAXAP (VEGA 2/0173/17), GRUPIK (MUNI/G/1211/2017) and the support of the Czech Grant Agency (GA\v CR) project no. GA16-22211S.
AB acknowledges the support of the John Templeton Foundation under the
project ID\# 60609 Causal Quantum Structures. The opinions expressed
in this publication are those of the authors and do not necessarily
reflect the views of the John Templeton Foundation.
\end{acknowledgments}

\newpage

\section{SUPPLEMENTAL MATERIAL}
This Supplemental Material provides a short introduction to theory of quantum networks, detailed proofs of Theorems 1,2 and more precise clarification of the relation of the presented work to the alignment of reference frames.

\section{Quantum networks and generalized instruments}
\label{sec:quant-netw-gener}

The mathematical formalization of the perfect learning of a unitary
channel can be easily given within the framework of \emph{quantum
  networks}. In this section we provide a small review of the subject and we
refer to the literature  \cite{architecture,comblong,actacomb} for a complete presentation.

We will start by introducing some notation.
If $\hilb{H}$ and $\hilb{K}$ are  finite-dimensional Hilbert spaces,
then we denote with  $\mathcal{L}(\hilb{H})$
the set of linear operator on $\hilb{H}$ and with $\mathcal{L}(\hilb{H},\hilb{K})$
the set of linear operator from $\hilb{H}$ to $\hilb{K}$.
We will use the
one-to-one correspondence between linear operators
$A \in \mathcal{L} (\hilb H,\hilb K)$
and vectors $\Ket{A} \in \hilb K \otimes \hilb H$ and
given by
\begin{align}\label{doubleket}
\Ket{A} = \sum_{m =1}^{{\rm dim} (\hilb K)} \sum_{n=1}^{{\rm dim} (\hilb H)}\<m |A|n\>  |m\> |n\>,
\end{align}
where $\{|m\> \}_{m=1}^{{\rm dim} (\hilb K)}$ and
$\{|n\> \}_{n=1}^{{\rm dim} (\hilb H)}$ are two fixed orthonormal
bases for $\hilb K$ and $\hilb H$, respectively.
For $A,B$ and $C$ operators on $\hilb{H}$
one can verify the identity
\begin{align}\label{doubleketidentity}
A \otimes B \Ket{C} = \Ket{ABC^T}
\end{align}
where $X^T$ denotes the transpose of
$X$ with respect to the orthonormal basis
$\ket{n}$.
A quantum operation $\mathcal{O}$ from
$\mathcal{L}(\hilb{H})$
to
$\mathcal{L}(\hilb{K})$
is a completely positive trace non increasing map
which can be represented by its Choi operator
$O\in\mathcal{L}(\hilb{K} \otimes \hilb{H})$.
The operator $O$
must satisfy
\begin{align}
  \label{eq:choiop}
  O \geq 0, \quad \Tr_{\hilb{K}}[O] \leq I_{\hilb{H}}
\end{align}
where $\Tr_{\hilb{K}}$ denotes the partial trace on $\hilb{K}$ and
$I_{\hilb{H}}$ the identity operator on $\hilb{H}$.  The two
  constraints in Eq. \eqref{eq:choiop} correspond to the complete
  positivity and trace non increasing of the quantum operation
  $\mathcal{O}$.
By making use of the notation in Eq.~\eqref{doubleket},
the Choi operator for a unitary channel $\mathcal{U}$
can be written as the rank one projector
$\Ket{U}\Bra{U}$.

The action of the quantum operation $\mathcal{O}$ on
  a quantum state $\rho \in \mathcal{L}(\hilb{H})$ can be described in
  terms of the Choi operator $O$ as follows
\begin{align}
  \label{eq:choiaction}
  \mathcal{O}(\rho) = \Tr_{\hilb{K}}[O (I_{\mathcal{K}}\otimes
  \rho^T)]=: O * \rho
\end{align}
where we introduce the \emph{link product} between the operators $O$
and $\rho$.  The composition of two quantum operations can be
represented in terms of their Choi operators too.
Let us consider two
quantum operations $\mathcal{O},\mathcal{O}'$
 with multipartite input and output, i.e.
$\mathcal{O}$ goes from
$\mathcal{L}(\hilb{H}_1\otimes \hilb{H}_2)$
to
$\mathcal{L}(\hilb{H}_3\otimes \hilb{K})$
and
$\mathcal{O}'$ goes from
$\mathcal{L}(\hilb{H}_4\otimes \hilb{K})$
to
$\mathcal{L}(\hilb{H}_5\otimes \hilb{H}_6)$.
We can connect the output of
$\mathcal{O}$ on $\mathcal{L}(\hilb{K})$
with the input
of
$\mathcal{O}'$ on $\mathcal{L}(\hilb{K})$
obtaining a new quantum operation from
$\mathcal{L}(\hilb{H}_1\otimes \hilb{H}_2 \otimes \hilb{H}_4)$
to
$\mathcal{L}(\hilb{H}_3\otimes \hilb{H}_5 \otimes \hilb{H}_6)$.
The Choi operator of the resulting quantum operation is given by the
link product of the two quantum operations, as follows:
\begin{align}
  \label{eq:linkoperation}
O'*O = \Tr_{\hilb{K}} [(O' \otimes I_{456}) ( I_{123} \otimes O^{T_\mathcal{K}} ) ]
\end{align}
where  $O^{T_\mathcal{K}}$ denotes the partial transposition of $O$
on the Hilbert space $\mathcal{K}$
and $I_{ijk}$ denotes the identity operator on
$\hilb{H}_i \otimes \hilb{H}_j \otimes \hilb{H}_k$.
We can interpret Eq.~\eqref{eq:choiaction} as an instance
of Eq. \eqref{eq:linkoperation}.

A quantum network $\mathcal{R}$ consists in a sequence of multipartite
quantum operations $ \{ \mathcal{O}_i , i=1,\dots N \} $ where some
output of a $ \mathcal{O}_i $ is connected to some input of the
following quantum operation $ \mathcal{O}_{i+1} $ as we illustrate in
the following diagram:
\begin{align}
  \begin{aligned}
\Qcircuit @C=0.5em @R=1em {
    \ustick{\scriptstyle{0}}&\multigate{1}{\mathcal{O}_1}&\ustick{\scriptstyle{1}}\qw&&\ustick{\scriptstyle{2}}&\multigate{1}{\mathcal{O}_2}&\ustick{\scriptstyle{3}}\qw&\pureghost{\dots}&\ustick{\scriptstyle{2N-2}}&\multigate{1}{\mathcal{O}_N}&\ustick{\scriptstyle{2N-1}}\qw\\
    &\pureghost{\mathcal{O}_1}&\qw&\qw&\qw&\ghost{\mathcal{O}_2}&\qw&\cdots&&\ghost{\mathcal{O}_N}&}
  \end{aligned},
\label{eq:qnetwork}
\end{align}
where the folating wires correspond to the input and output systems of the quantum network.
$\mathcal{R}$  is called a \emph{deterministic} quantum network if all the quantum
operations in Eq.~\eqref{eq:qnetwork} are trace preserving, and it is
called a \emph{probabilistic} quantum network otherwise.

A quantum network can be represented by a Choi operator (commonly
called \emph{quantum comb}) which is given by the link product of all
the component quantum operations.
The Choi operator $R$ of a deterministic quantum network $\mathcal{R}$
obeys the following constraints
\begin{align}\label{recnorm}
\Tr_{2k-1} [ R^{(k)}] = I_{2k-2} \otimes R^{(k-1)} \qquad k=1, \dots, N~
\end{align}
where, referring to the diagram in Eq.~\eqref{eq:qnetwork}, the Hilbert
space of the wire labelled by $j$ is $\hilb{H}_j$,  $R^{(N)}=R$, $R^{(0)} =1$,
$R^{(k)} \in \mathcal{L}(\hilb{H}_{{odd}_k} \otimes
\hilb{H}_{{even}_k})$
with $\hilb{H}_{{even}_k} = \bigotimes_{j=0}^{k-1} \hilb{H}_{2j}$ and
$\hilb{H}_{{odd}_k} = \bigotimes_{j=0}^{k-1}
\hilb{H}_{2j+1}$. $R^{(k)}$ is the
Choi operator of the reduced network $\mathcal R^{(k)}$ obtained by
discarding the last $N-k$ teeth.  The set of of positive operators satisfying Eq. (\ref{recnorm})
and the set of deterministic quantum networks are in one to one
correspondence. On the other hand, a given deterministic quantum network
$\mathcal R$ can be realized as a composition of quantum channels
in many different ways.
In the probabilistic case,
the Choi operator of
 a probabilistic quantum network $\mathcal{T}$,
must satisfy
\begin{align}\label{eq:recnormprob}
  0 \leq T \leq R
\end{align}
where $R$ is the Choi operator of a deterministic
quantum network.
A given probabilitic quantum network
$\mathcal T$ can be realised as a composition of quantum operations
in many different ways. In particular, any
 probabilitic quantum network
$\mathcal T$ can be realised by a composition of channels $\{ \mathcal{C}\}$ and a final
quantum operation $\mathcal{O}$ as follows:
\begin{align}
\mathcal{T}=
  \begin{aligned}
\Qcircuit @C=0.5em @R=1em {
    \ustick{\scriptstyle{0}}&\multigate{1}{\mathcal{C}_1}&\ustick{\scriptstyle{1}}\qw&&\ustick{\scriptstyle{2}}&\multigate{1}{\mathcal{C}_2}&\ustick{\scriptstyle{3}}\qw&\pureghost{\dots}&\ustick{\scriptstyle{2N-2}}&\multigate{1}{\mathcal{O}}&\ustick{\scriptstyle{2N-1}}\qw\\
    &\pureghost{\mathcal{C}_1}&\qw&\qw&\qw&\ghost{\mathcal{C}_2}&\qw&\cdots&&\ghost{\mathcal{O}}&}
  \end{aligned}.
\label{eq:probqnetwork}
\end{align}
A set of probabilistic quantum networks $\{\mathcal{R}_i\}$, with the
same input and output wires, is called a \emph{generalised quantum
  instrument} if the sum of their Choi operators
$\sum_i R_i =: R$ is the Choi operator of a deterministic quantum
networks.
As in the analogous case of quantum instruments, the index $i$
which labels the elements of a generalised quantum
  instrument represents the classical outcome which is available
after the quantum network has been provided with some input.
If the outcome $i$ is obtained then it means that the
probabilistic quantum network
$\mathcal{R}_i$ happened.
Any generalised quantum instrument can always be realised by a
 a composition of channels followed by a final
quantum intrument.
We notice that for any
probabilistic quantum network there exists a generalised quantum
instrument which it belongs to.

\section{Relevant sub-blocks of retrieving operation $R_s$}
As we stated in the main text,
Choi operator $R_s$ of the retrieving operation 
can be chosen to satisfy the commutation relation
\begin{align}
  \label{eq:commretriev}
  \left [ R_s,U'^* V' \otimes U_{\rm in}\otimes V^*_{\rm out}   \right ]=0,
\end{align}
where $                                                                                                                                                                                                                                    U' := \bigoplus_j {U}_j \otimes I_{j}$, $V' := \bigoplus_j {I}_j \otimes V_{j}$. We remind also the perfect retrieving condition
\begin{align}
\label{eq:simplifiedlambda}
\bra{\psi} R_s \ket{\psi}= \lambda \KetBra{I}{I}.
\end{align}
For convenience we placed here also the decomposition
\begin{align}
  \label{eq:decompopartial}
  \begin{aligned}
    U^*_j \otimes U = \bigoplus_{J\in {\rm Irr}(U^*_j \otimes U)} U_J \otimes I_{m^{(j)}_{J}},
\end{aligned}
\end{align}
which induces the Hilbert space decomposition
\begin{align}
\hilb{H}_j \otimes \hilb{H} = \bigoplus_{J\in {\rm Irr}(U^*_j \otimes U)} \hilb{H}_J \otimes \hilb{H}_{m^{(j)}_{J}}.
\end{align}
First, we notice that the multiplicity spaces
$\hilb{H}_{m^{(j)}_{J}} $ and  $\hilb{H}_{m^{(j)}_{K}} $ are one dimensional and therefore
$ I_{m^{(j)}_{J}}$ are rank one. From the Schur-Weyl duality, any irreducible
  representation $U_j$ of $SU(d)$ is in correspondence with a young diagram $Y_j$. The defining representation $U$ is represented by a
  single box $\Box$. One can verify that there cannot be two equivalent Young diagrams in the  decomposition $Y_j \times \Box = \sum_K Y_K$.
 For a more detailed treatment we refer to \cite{fulton2013}.
Then we have that
\begin{align}
  \label{eq:decompototal}
  \begin{aligned}
    U'V' \otimes U^*\otimes V^* =
    \bigoplus_{JK} U_J\otimes V_K \otimes I_{m_{JK}}
  \end{aligned}
\end{align}
induces decomposition
$\hilb{H}_{m_{JK}} = \bigoplus_{j\in \mathsf{j}_{JK}} \hilb{H}_{m_J^{(j)}}\otimes\hilb{H}_{m_K^{(j)}}$,
where $\mathsf{j}_{JK}$ denotes the set of values of
$j$ such that $ U_J\otimes V_K$ is in the decomposition of
$U^*_j\otimes V_j \otimes U \otimes V^*$.
Since $\dim(\hilb{H}_{m_J^{(j)}})=1$
we stress that $\BraKet{I_{m_J^{(j)}}}{I_{m_J^{(j')}}}=\delta_{j,j'}$
and $ \hilb{H}_{m_{JJ}} = \spn(\{\Ket{I_{m_J^{(j)}}}\}, j\in \mathsf{j}_{JJ})$.

From Eq.~\eqref{eq:decompototal} the commutation relation of
Eq.~\eqref{eq:commretriev}
becomes $[R_s ,  \bigoplus_{JK} U_J \otimes V_K \otimes I_{m_{JK}}]=0$,
which, thanks to the Schur's lemma, gives
\begin{align}
  \label{eq:decompor}
  \begin{aligned}
    R_s = \bigoplus_{J,K} I_J \otimes I_K \otimes s^{(JK)},
  \end{aligned}
\end{align}
where $s^{(JK)} \in \mathcal{L}(\hilb{H}_{m_{JK}})$, $s^{(JK)}\geq 0$.
Due to $\KetBra{I}{I}$ being a rank one operator and 
$R_s$ being the sum of the positive operators from Eq. \eqref{eq:decompor} we have that Eq. (\ref{eq:simplifiedlambda}) holds if and only if
\begin{align}
\label{eq:perfectforcomponents}
\bra{\psi}   I_J \otimes I_K \otimes s^{(JK)} \ket{\psi} =
\lambda_{JK} \KetBra{I}{I} \quad \forall J,K.
\end{align}
From the identity $I_j \otimes I = \bigoplus_{J\in {\rm Irr}(U^*_j \otimes U)}
I_J \otimes I_{m^{(j)}_{J}}$ (we remind that $ I_{m^{(j)}_{J}}$ has rank one),
we obtain
\begin{align}
  \label{eq:statedouble}
      \ket{\psi}\Ket{I} &= \bigoplus_j \bigoplus_{J\in {\rm Irr}(U^*_j \otimes U)}
\sqrt{\frac{p_j}{d_j}} \Ket{I_J}\Ket{I_{m_J^{(j)}}}
= \bigoplus_J
\Ket{I_J} \ket{\phi_J} \\
\ket{\phi_J}:&= \bigoplus_{j\in \mathsf{j}_{JJ}}
\sqrt{\frac{p_j}{d_j}} \Ket{I_{m_J^{(j)}}}.
\label{eq:relevantstate_app}
 \end{align}
Using Eqs.~\eqref{eq:decompor}, (\ref{eq:statedouble}) into $\lambda =\frac{1}{d^2}\Bra{I}\bra{\psi} R_s \ket{\psi} \Ket{I} $
we obtain
\begin{align}
\label{eq:lambdaJ_app}
  \lambda = \sum_J \lambda_{JJ} \quad \;\;
  \lambda_{JJ} = \frac{d_J}{d^2} \bra{\phi_J}s^{(JJ)}\ket{\phi_J}
 \end{align}
where the
$\lambda_{JK}$'s were defined in Eq.~\eqref{eq:perfectforcomponents}.
It is now easy to show that we can assume
\begin{align}
  \label{eq:diagonalR_app}
    R_s &= \bigoplus_{J} I_J \otimes I_J \otimes s^{(J)},
\end{align}
where $s^{(J)} := \sum_{j,j' \in \mathsf{j}_{JJ}} s^{(J)}_{jj'} \KetBra{I_{m_J^{(j)}}}{I_{m_J^{(j')}}}$.
Indeed, let $R_s'=\bigoplus_{JK} I_J \otimes I_K \otimes s'^{(JK)}$ be
the optimal quantum operation and let us define the operators
$R_s=\bigoplus_{J} I_J \otimes I_J \otimes s^{(J)}$
where $s^{(J)} =  s'^{(JJ)}$ and
 $R''_s=\bigoplus_{J\neq K} I_J \otimes I_K \otimes s^{(JK)}$.
Since both $R_s$ and $R_s''$ are positive and
$R_s+R_s'' = R_s'$, we have that $\Tr_D[R_s']\leq I$ implies
 $\Tr_D[R_s]\leq I$ i.e. $R_s$ is a quantum operation.
Finally, from Eq.~\eqref{eq:statedouble} we have that
 $\bra{\psi}R_s\ket{\psi} = \bra{\psi}R'_s\ket{\psi}$, thus proving
 that also $\{R_s,\ket{\psi} \}$ is an optimal solution
 of our optimization problem.

\section{Explicit form of the retrieved channel}
\label{sec:proof-oflemma1}
Due to commutation relation (\ref{eq:commretriev}) and the form of $R_s$ given by Eq. (\ref{eq:diagonalR_app})
the retrieved channel has the following Choi operator
\begin{align}
  \label{eq:nuJ_app}
  \bra{\psi}    R_s \ket{\psi} =\sum_J  \lambda_J \KetBra{I}{I} + \nu_J
  \left(  I - \tfrac{1}{d} \KetBra{I}{I}\right).
\end{align}
As we stated in the main text the perfect retrieving condition is satisfied if and only if $\nu_J=0$ for all $J$. This happens because positive-semidefiniteness
of $R_s^{(J)}:= I_J \otimes I_J \otimes s^{(J)}$ implies $\nu_J\geq 0$ and the requirement $\sum_J \nu_J=0$ implies that all the terms must vanish.
Let us study separately every operator $R^{(J)}_s$, which by definition satisfies the commutation relation of Eq.~\eqref{eq:commretriev}.
We have that
\begin{align}
\nonumber
  &\bra{\psi} R_s^{(J)} \ket{\psi} = \\
\nonumber
  &=\bra{\psi}  (   U'U'^* \otimes U^*\otimes U ) R_s^{(J)} (U'U'^*
  \otimes U^*\otimes U)^\dag \ket{\psi}\\
 &=( U^*\otimes U  ) \bra{\psi}    R_s^{(J)} \ket{\psi}
(U^*\otimes U)^\dag \;\; \forall U \label{eq:commutationforperfect}
  \end{align}
Thanks to the Schur's lemma Eq.~\eqref{eq:commutationforperfect}
gives
\begin{align}
  \label{eq:nuJ_app}
  \bra{\psi}    R_s^{(J)} \ket{\psi} = \lambda_J \KetBra{I}{I} + \nu_J
  \left(  I - \tfrac{1}{d} \KetBra{I}{I}\right).
\end{align}
By taking the trace of Eq.~\eqref{eq:nuJ_app} we have
\begin{align}
  \label{eq:computenuJ}
  \begin{aligned}
&  \Tr[\bra{\psi}    R_s^{(J)} \ket{\psi} ]=
\bra{\psi} \Tr_{out\;in}[    R_s^{(J)} ] \ket{\psi}=\\
&\bra{\psi} \bigoplus_{j\in \mathsf{j}_{JJ}} I_j\otimes I_j q^{(J)}_j\ket{\psi}=\\
&\sum_{j\in \mathsf{j}_{JJ}} p_j  q^{(J)}_j = \lambda_J d + \nu_J(d^2-1)
  \end{aligned}\\
q^{(J)}_j := \frac{d^2_J}{d^2_j} s^{(J)}_{jj}. \nonumber
\end{align}
If we insert Eq.~\eqref{eq:relevantstate_app} into Eq.~\eqref{eq:lambdaJ_app}
we have
\begin{align}
  \label{eq:lambdaJexplicit}
\lambda_J= \frac{d_J}{d^2} \sum_{j,j'\in\mathsf{j}_{JJ}}
    \sqrt{\frac{p_jp_{j'}}{d_jd_{j'}}}
s^{(J)}_{jj'}.
\end{align}
 From Eq.~\eqref{eq:computenuJ} and Eq.~\eqref{eq:lambdaJexplicit}
we have
\begin{align}
  \label{eq:nuJequal0}
  \begin{aligned}
  &\nu_J = 0 \iff\\
&\sum_{j,j'\in\mathsf{j}_{JJ}}
\!\!\! \delta_{j,j'}\,
d\, d_J\frac{p_j}{d^2_j} s^{(J)}_{jj}
-\sqrt{\frac{p_jp_{j'}}{d_jd_{j'}}}
s^{(J)}_{jj'}       =0,
  \end{aligned}
\end{align}
which is the most explicit form of the perfect retrieving condition that constraints the relation between the state $\ket{\psi}$ parametrized by probabilities $p_j$ and the structure of the retrieving operation parameterized by $s^{(J)}_{jk}$.


\section{$N\rightarrow 1$ PSAR as a linear programming problem}
In this section we provide complete proof of Theorem $1$ from the main text.
First, we prove the following technical lemma, which will be needed.

\begin{lemma}
\label{lmm:specialX}
Suppose a 
matrix $X=\sum_{j,j'} X_{jj'}\ket{j}\bra{j'}\geq 0$ obeys
$\sum_{j,j'} X_{jj'} = \sum_{j} \frac{1}{c_j} X_{jj}$, where $c_j>0$ and $\sum_j c_j=1$. This implies
$X \propto \ket{\chi}\bra{\chi}$, where $\ket{\chi}=\sum_j c_j \ket{j}$.
\end{lemma}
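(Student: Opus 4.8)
The plan is to exploit positivity of $X$ together with the equality constraint, rewriting the latter as a statement that a certain positive quantity vanishes. First I would observe that since $\sum_j c_j = 1$, the hypothesis $\sum_{j,j'} X_{jj'} = \sum_j \frac{1}{c_j} X_{jj}$ can be rewritten as
\begin{align}
\sum_j \frac{1}{c_j} X_{jj} - \sum_{j,j'} X_{jj'} = \left( \sum_j c_j \right) \sum_{k} \frac{1}{c_k} X_{kk} - \sum_{j,j'} X_{jj'} = 0 .
\end{align}
I would then expand $\bra{\chi} X \ket{\chi}$ and more usefully consider the quadratic form of $X$ evaluated on the family of vectors $\ket{v_{jk}} = \sqrt{c_k}\, \ket{e_j} - \sqrt{c_j}\, \ket{e_k}$ (with $\{\ket{e_j}\}$ the standard basis), since $\bra{v_{jk}} X \ket{v_{jk}} \geq 0$ for every pair $j,k$ by positivity of $X$.

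The key step is to show that the constraint forces $\sum_{j<k} \frac{1}{c_j c_k}\bra{v_{jk}} X \ket{v_{jk}} = 0$, so that each individual term vanishes. Expanding,
\begin{align}
\bra{v_{jk}} X \ket{v_{jk}} = c_k X_{jj} + c_j X_{kk} - \sqrt{c_j c_k}\,(X_{jk} + X_{kj}),
\end{align}
and dividing by $c_j c_k$ and summing over unordered pairs, the diagonal contributions reproduce (with the right multiplicities, using $\sum_k c_k = 1$) the combination $\sum_j \frac{1}{c_j} X_{jj} - \sum_{j,j'} X_{jj'}$, which is zero by hypothesis; here one also uses that $X \geq 0$ forces $X$ to be Hermitian, so $X_{jk}+X_{kj} = 2\,\mathrm{Re}\,X_{jk}$ and the off-diagonal real parts assemble into $\sum_{j \neq j'} X_{jj'}$. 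Since each summand is non-negative and the sum is zero, every $\bra{v_{jk}} X \ket{v_{jk}} = 0$.

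Finally, I would conclude as follows: $\bra{v_{jk}} X \ket{v_{jk}} = 0$ together with $X \geq 0$ implies $X \ket{v_{jk}} = 0$ for all $j,k$ (a positive operator annihilates any vector of zero expectation). The vectors $\{\ket{v_{jk}}\}$ span the orthogonal complement of $\ket{\chi} = \sum_j c_j \ket{e_j}$ — indeed $\ket{v_{jk}} = \sqrt{c_j c_k}\big(\tfrac{1}{\sqrt{c_k}}\sqrt{c_k}\ket{e_j}\cdots\big)$, more directly any vector orthogonal to $(\sqrt{c_j})_j$ is a combination of the $\ket{v_{jk}}$, and rescaling coordinates by $\sqrt{c_j}$ maps $\ket{\chi}$ to $(\sqrt{c_j})_j$. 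Hence the kernel of $X$ contains a subspace of codimension one, so $\mathrm{rank}\, X \leq 1$, and since $X \geq 0$ we get $X = \alpha \ket{\chi'}\bra{\chi'}$ for the unique (up to phase) vector $\ket{\chi'}$ orthogonal to all $\ket{v_{jk}}$; checking that $\ket{\chi'} \propto \ket{\chi}$ finishes the argument. The main obstacle is purely bookkeeping: getting the combinatorial coefficients in the sum over pairs to match the hypothesis exactly, and handling the (anti-)Hermitian parts of $X$ correctly — the imaginary parts of the off-diagonal entries must be shown to drop out or be forced to zero by the same positivity argument, which they are since $\ket{v_{jk}}$ can be taken with a relative phase.
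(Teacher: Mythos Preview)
Your strategy is sound, but you chose the wrong vectors. The $\ket{v_{jk}}=\sqrt{c_k}\ket{e_j}-\sqrt{c_j}\ket{e_k}$ are orthogonal to $\ket{\eta}=\sum_j\sqrt{c_j}\ket{e_j}$, \emph{not} to $\ket{\chi}=\sum_j c_j\ket{e_j}$; indeed $\braket{\chi}{v_{jk}}=\sqrt{c_jc_k}(\sqrt{c_j}-\sqrt{c_k})$. Consequently the key identity fails: plugging in $X=\ketbra{\chi}{\chi}$ itself gives $\tfrac{1}{c_jc_k}\bra{v_{jk}}X\ket{v_{jk}}=(\sqrt{c_j}-\sqrt{c_k})^2$, so the sum is strictly positive unless all $c_j$ coincide. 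The ``bookkeeping obstacle'' you flag is in fact fatal with these vectors: after dividing by $c_jc_k$ the off-diagonal terms carry weight $1/\sqrt{c_jc_k}$ rather than $1$, and cannot reproduce $\sum_{j\neq k}X_{jk}$.

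The fix is to use $\ket{w_{jk}}=c_k\ket{e_j}-c_j\ket{e_k}$ instead. These \emph{are} orthogonal to $\ket{\chi}$, and now
\begin{align*}
\sum_{j<k}\frac{1}{c_jc_k}\bra{w_{jk}}X\ket{w_{jk}}
&=\sum_{j<k}\Big[\tfrac{c_k}{c_j}X_{jj}+\tfrac{c_j}{c_k}X_{kk}-X_{jk}-X_{kj}\Big] \\
&=\sum_j\Big(\tfrac{1}{c_j}-1\Big)X_{jj}-\sum_{j\neq k}X_{jk}
=\sum_j\tfrac{X_{jj}}{c_j}-\sum_{j,k}X_{jk}=0
\end{align*}
by the hypothesis. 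From here your argument goes through verbatim: each nonnegative summand vanishes, hence $X\ket{w_{jk}}=0$; the $\ket{w_{jk}}$ span $\ket{\chi}^\perp$, so $X\propto\ketbra{\chi}{\chi}$. No separate handling of imaginary parts is needed, since $\sum_{j,k}X_{jk}=\bra{v}X\ket{v}\in\mathbb{R}$ for Hermitian $X$.

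For comparison, the paper's proof is quite different and less direct: it rewrites the hypothesis as $\bra{v}(A-X)\ket{v}=0$ with $\ket{v}=\sum_j\ket{j}$ and $A=\sum_j\tfrac{X_{jj}}{c_j}\ketbra{j}{j}$, establishes $A-X\geq 0$ and $A-\ketbra{\rho}{\rho}\geq 0$ for $\ket{\rho}=\sum_j\sqrt{X_{jj}}\ket{j}$, and chains inequalities to pin down the diagonal $X_{jj}=\mu c_j^2$ first, recovering the off-diagonals afterwards. Your (corrected) route is more elementary.
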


\begin{proof}
Let us define
\begin{align}
  \label{eq:9}
  \ket{v} &:= \sum_{j}\ket{j} \\
  \ket{\rho} &:= \sum_j \sqrt{X_{jj}} \ket{j} \\
  A &:= \sum_{j}\frac{1}{c_j} X_{jj}\ketbra{j}{j}
\end{align}
The condition $\sum_{j,j'} X_{jj'} = \sum_{j} \frac{1}{c_j} X_{jj}$ can be written as
\begin{align}
  \label{eq:10}
  \bra{v} A -X \ket{v} = 0.
\end{align}
Matrix $H_{ij}$ is positive semidefinite if and only if $H_{ii}\geq 0 \;\;\forall i$ and $|H_{ij}|\leq \sqrt{H_{ii}H_{jj}}$ $\forall i\neq j$. Using this criterion and $\sum_j c_j=1$ one can easily show that both $A-X$  and $A -\ketbra{\rho}{\rho} $ are  positive semidefinite matrices.
Moreover, using $\Re(X_{jj'})\leq |X_{jj'}|\leq \sqrt{X_{jj}X_{j'j'}}$ one can easily prove the inequality
\begin{align}
  \label{eq:12}
\bra{v}( A -X )\ket{v} \geq   \bra{v} (A -\ketbra{\rho}{\rho}) \ket{v},
\end{align}
which also gives
\begin{align}
  \label{eq:13}
  \bra{v} (A -X) \ket{v}  = 0  \implies   \bra{v}(A -\ketbra{\rho}{\rho})
  \ket{v} =0,
\end{align}
due to $A -\ketbra{\rho}{\rho}\geq 0$.
Moreover, let us rewrite expression $\bra{v} A \ket{v}$ as
\begin{align}
  \label{eq:14}
  \bra{v} A \ket{v}  = \bra{\rho} B  \ket{\rho} \\
  B :=  \sum_{j}\frac{1}{c_j} \ketbra{j}{j}
\end{align}
As a consequence we have
\begin{align}
  \label{eq:15}
   \bra{v}(A -\ketbra{\rho}{\rho})
  \ket{v} = \bra{\rho} ( B - \ketbra{v}{v} )\ket{\rho}.
\end{align}
Thanks to $c_j>0 \;\;\forall j$ we have that $B - \ketbra{v}{v}$ is a positive matrix, which has either trivial or one dimensional kernel. This together with Eqs. (\ref{eq:13}),(\ref{eq:15}) allows us to write a necessary condition for matrix $X$
\begin{align}
  \label{eq:16}
 \bra{\rho} ( B - \ketbra{v}{v} )\ket{\rho} =0 \\
  \implies
  B  \ket{\rho} - \ket{v}   \braket{v}{\rho} =0
\end{align}
Explicitly solving the above equation we get the only possible solution
\begin{align}
 \label{eq:17}
  \frac{1}{c_j}\sqrt{X_{jj}}  =  \frac{1}{c_{j'}}\sqrt{X_{j'j'}}
  \implies
  X_{jj} = \mu c_j^2,
\end{align}
which is unique up to a constant $\mu$, as we expected due to the rank one deficiency of $B - \ketbra{v}{v}$.
Once the diagonal elements $X_{jj}$ respect Eq. (\ref{eq:17}) we have  $\bra{v}(A -\ketbra{\rho}{\rho})\ket{v}=0$, but to fulfill LHS of Eq. (\ref{eq:13}) we need also the saturation of the bound (\ref{eq:12}). This happens if and only if
\begin{align}
 \label{eq:18}
  X_{jj'} = \sqrt{X_{jj}}\sqrt{X_{j'j'}},
\end{align}
which together with Eq. (\ref{eq:17}) proves the claim of the lemma.
\end{proof}

Let us restate Theorem 1 from the main text.

\begin{theorem}
\label{thm:lp}
  For optimal PSAR the success probability $\lambda$ is given by
  the following linear programming problem:
 \begin{align}
  \label{eq:optimizationpjmj}
    & \underset{\mu_J, p_j}{\mbox{\rm maximize}}
& & \lambda =\sum_{J\in\usbl} d_J^3 \mu_{J},
\\
& \mbox{\rm subject to}
&&0 \leq d_J \mu_J  \leq \frac{p_j}{d_j^2} \quad \forall j \in \mathsf{j}_{JJ}  \;\;\;\forall J\in \usbl  \nonumber \\
&&& p_j \geq 0 \quad \sum_{j\in {\rm Irr}(U^{\otimes N})} p_j=1\; , \nonumber
 \end{align}
 where $\usbl=\{J\in {\rm Irr}(U^{\otimes N}\otimes U^*) | d d_J = \sum_{j\in\mathsf{j}_{JJ}}d_j \}.$
\end{theorem}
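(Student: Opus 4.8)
The plan is to show that the operator optimization over admissible retrieving operations $R_s$ (with $\ket\psi$ of the form in Eq.~\eqref{eq:optstate}) is equivalent, in both directions, to the stated linear program, using the block decomposition of $R_s$ already obtained together with Lemma~\ref{lmm:specialX}.

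First I would collect what survives the earlier reductions: $R_s=\bigoplus_J I_J\otimes I_J\otimes s^{(J)}$ with $s^{(J)}\geq0$; perfect retrieval is equivalent to $\nu_J=0$ for every $J$, i.e.\ to Eq.~\eqref{eq:nuJequal0}; the success probability is $\lambda=\sum_J\lambda_J$ with $\lambda_J$ as in Eq.~\eqref{eq:lambdaJexplicit}; and the only remaining normalization is $\Tr_{\rm out}[R_s]\leq I$. The key move is to repackage, for each $J$, the data $\{s^{(J)}_{jj'}\}_{j,j'\in\mathsf{j}_{JJ}}$ into the positive semidefinite matrix $X^{(J)}$ with entries $X^{(J)}_{jj'}:=\sqrt{p_jp_{j'}/(d_jd_{j'})}\,s^{(J)}_{jj'}$ (one may keep only the indices with $p_j>0$: those with $p_j=0$ drop out of Eqs.~\eqref{eq:nuJequal0} and \eqref{eq:lambdaJexplicit} and merely force $\mu_J=0$ through the normalization, consistently with the linear program). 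A direct rewriting turns $\nu_J=0$ into exactly $\sum_{j,j'}X^{(J)}_{jj'}=\sum_j\tfrac{1}{c_j}X^{(J)}_{jj}$ with $c_j:=d_j/(d\,d_J)$, which is the hypothesis of Lemma~\ref{lmm:specialX}.

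The heart of the proof is then the dichotomy governed by $\sum_{j\in\mathsf{j}_{JJ}}d_j$ versus $d\,d_J$: a dimension count (using that $Y_j\times\Box$ carries no repeated Young diagram) gives $\sum_{j\in\mathsf{j}_{JJ}}d_j\leq d\,d_J$, with equality precisely when $J\in\usbl$. If $J\notin\usbl$ then $\sum_j c_j<1$, and chaining the two Cauchy--Schwarz bounds $\sum_{j,j'}X^{(J)}_{jj'}\leq(\sum_j\sqrt{X^{(J)}_{jj}})^2\leq(\sum_j d_j)(\sum_j X^{(J)}_{jj}/d_j)$ against the relation above yields $(d\,d_J-\sum_j d_j)\sum_j X^{(J)}_{jj}/d_j\leq0$; since the prefactor is strictly positive, every $X^{(J)}_{jj}$ must vanish, hence $X^{(J)}=0$ and $s^{(J)}=0$ on the support of $\ket\psi$, so such $J$ cannot contribute to $\lambda$. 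If $J\in\usbl$ then $c_j>0$ and $\sum_j c_j=1$, so Lemma~\ref{lmm:specialX} applies and gives $X^{(J)}\propto\ket{\chi_J}\bra{\chi_J}$ with $\ket{\chi_J}\propto\sum_j d_j\ket{j}$; translating back, $\sqrt{p_jp_{j'}}\,s^{(J)}_{jj'}=\mu_J\sqrt{d_j^3d_{j'}^3}$ for some $\mu_J\geq0$, which is the parametrization announced in the main-text sketch.

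What remains is bookkeeping. Inserting this form into Eq.~\eqref{eq:lambdaJexplicit} and using $\sum_{j\in\mathsf{j}_{JJ}}d_j=d\,d_J$ gives $\lambda_J=d_J^3\mu_J$, hence $\lambda=\sum_{J\in\usbl}d_J^3\mu_J$; from the block form $\Tr_{\rm out}[R_s]=\bigoplus_{J}\bigoplus_{j\in\mathsf{j}_{JJ}}\tfrac{d_J}{d_j}s^{(J)}_{jj}\,I_J\otimes I_j$ and the diagonal relation $s^{(J)}_{jj}=\mu_J d_j^3/p_j$ the constraint $\Tr_{\rm out}[R_s]\leq I$ becomes $0\leq d_J\mu_J\leq p_j/d_j^2$ for all $j\in\mathsf{j}_{JJ}$, and $p_j\geq0$, $\sum_j p_j=1$ come from Eq.~\eqref{eq:optstate}. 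This shows every admissible PSAR produces a feasible point of the linear program with the same $\lambda$. For the converse I would reconstruct a solution: given feasible $(\mu_J,p_j)$, take $\ket\psi$ from Eq.~\eqref{eq:optstate} and set $s^{(J)}=\mu_J\ket{w_J}\bra{w_J}$ with $\ket{w_J}=\sum_{j\in\mathsf{j}_{JJ}}(d_j^3/p_j)^{1/2}\Ket{I_{m_J^{(j)}}}$; then $s^{(J)}\geq0$, the identity $\nu_J=0$ holds by construction (again using $\sum_j d_j=d\,d_J$), and $\Tr_{\rm out}[R_s]\leq I$ is exactly the LP constraint, so $\sum_{J\in\usbl}d_J^3\mu_J$ is achievable. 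I expect the main obstacle to be the third paragraph: organizing the $J\in\usbl$ versus $J\notin\usbl$ alternative together with the $p_j=0$ degeneracies so that Lemma~\ref{lmm:specialX} applies cleanly and no admissible solution is missed, whereas the dimension bound and the final computation of $\lambda_J$ are comparatively routine.
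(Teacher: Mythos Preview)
Your proposal is correct and follows essentially the same architecture as the paper's proof: block-decompose $R_s$, recast perfect retrieval as the matrix condition on $X^{(J)}$, invoke Lemma~\ref{lmm:specialX} to force the rank-one form $\sqrt{p_jp_{j'}}\,s^{(J)}_{jj'}=\mu_J\sqrt{d_j^3d_{j'}^3}$, and read off the LP from $\Tr_{\rm out}[R_s]\leq I$.

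The one genuine difference is how you dispose of $J\notin\usbl$. The paper identifies these $J$ explicitly via Young diagrams as the $Y_{|j}$'s, observes that for each of them $\mathsf{j}_{JJ}=\{j\}$ is a singleton, and then reads $s^{(|j)}_{jj}=0$ directly from Eq.~\eqref{eq:nuJequal0}. Your route instead uses only the dimension inequality $\sum_{j\in\mathsf{j}_{JJ}}d_j\leq d\,d_J$ (Pieri/multiplicity-freeness) together with a Cauchy--Schwarz chain to force $X^{(J)}=0$ whenever the inequality is strict. This is more self-contained for Theorem~\ref{thm:lp} as stated (you never need to know what $\overline{\usbl}$ actually looks like), and it simultaneously covers the $p_j=0$ degeneracy you flag, since restricting to $\{j:p_j>0\}$ again makes $\sum c_j<1$. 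The paper's combinatorial identification, on the other hand, is not wasted effort: it establishes $\usbl={\rm Irr}(U^{\otimes N-1})$, which is needed downstream in the proof of Theorem~\ref{thm:Main}. Your explicit construction of the converse (feasible $(\mu_J,p_j)\Rightarrow$ admissible $R_s$) is also a useful addition, since the paper leaves it implicit.
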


\begin{proof}
We first need to examine relations between IRR's that appear in the decomposition of $U^{\otimes N}$ and
those that appear in $\bigoplus_{j\in {\rm Irr}(U^{\otimes N})} U^{(j)}\otimes U^*$.

We remind that from the Schur-Weyl duality, any irreducible representation $U_j$ of $SU(d)$ is in correspondence with a young diagram $Y_j$. The defining representation $U$ is represented by a single box $\Box$ and IRR defined via $U^*$ is represented by a column of $d-1$ boxes.

Decomposition of $U^{\otimes N}$ into IRRs can be obtained by collecting the decompositions of the tensor products $U_k \otimes U$ of all Young diagrams $k$ appearing with multiplicity $m_k$ in the decomposition of $U^{\otimes N-1}$ and putting together equivalent IRRs (those with the same Young diagram).
This can be mathematically stated as follows. Let ${\rm Irr}(U^{\otimes N})$ denote the set of Young diagrams that appear in the decomposition of $U^{\otimes N}$ into IRRs of $SU(d)$.
We have that $K\in {\rm Irr}(U^{\otimes N})$ if and only if $\exists k\in {\rm Irr}(U^{\otimes N-1})$ such that $K\in {\rm Irr}(U_k\otimes U)$ and $m_K=\sum_{k\in \mathsf{k}_K} m_k$, where $\mathsf{k}_{K}$ denotes the set of values of $k$ such that $U_K$ is in the decomposition of $U_k\otimes U$.
On the other hand, thanks to Schur-Weyl duality the multiplicity
$m_K=\DP_K$ ($m_k=\DP_k$)
is given by the dimension of the  IRRs of the symmetric group $S(N)$ ($S(N-1)$) with the Young diagram $K$ ($k$), respectively.
Hence, we obtained a known identity \cite{fulton2013}
\begin{align}
\label{eq:sumpg1}
\DP_K=\sum_{k\in \mathsf{k}_K} \DP_k,
\end{align}
where $\mathsf{k}_K$ can be equivalently specified as those Young diagrams $k$, which by addition of a single box become $K$.

Next, we consider decomposition
of 
$U_j \otimes U^*$ (or more conveniently $U^* \otimes U_j$), where
$j\in {\rm Irr}(U^{\otimes N})$.  We denote Young diagram $Y_j$ with
$r$ rows and $n_i$ boxes in the $i$-th row as $(n_1,n_2,\ldots, n_r)$.
A valid Young diagram of $SU(d)$ IRR has $r\leq d$, $n_{r}>0$ and
$n_i\geq n_{i+1}\;\;\forall i$ (we set $n_{r+1}=0$).  Rows $i$ in
which $n_i> n_{i+1}$ we call corners of $(n_1,n_2,\ldots, n_r)$ and we
denote the number of corners by $s$ and we write $i\in \corn_j$.
Suppose $Y_j \leftrightarrow (n_1,n_2,\ldots, n_r)$ has $r\leq
d-1$. Then the decomposition of $U^* \otimes U_j$ contains $s+1$ Young
diagrams each with multiplicity one. One of them is given as Young
diagram $(n_1+1,n_2+1,\ldots,n_r+1,1,\ldots, 1)$ with $d-1$ rows,
which we denote $Y_{|j}$ and for each $i\in\corn_j$ we have Young
diagram
$Y_{j\backslash \lrcorner i}\leftrightarrow (n_1,\ldots,n_i-1,\ldots,
n_r)$. The above statement follows from the Littlewood-Richardson
rules \cite{fulton2013} if one realizes, that either one of the corner boxes completes
the first column into $d$ boxes (the remaining boxes can be only
attached to the right in the original order) or the whole Young
diagram is attached from the right to the column of $d-1$ boxes.  If
$Y_j \leftrightarrow (n_1,n_2,\ldots, n_r)$ has $r=d$ the situation is
the same except for the diagram $Y_{|j}$ not appearing in the
decomposition, because it would not be a valid Young diagram.  Let us
note that Young diagram $Y_{|j}$ can emerge in our setting only from
diagram $Y_l$, where $l=j$. We can also easily verify that
$d d_{|j}\neq d_j$, which can be seen from the formula for the
dimension of $SU(d)$ IRRs \cite{Stanley1999} by calculating the
fraction $d_{|j}/d_j$ for a general $j$.  Therefore, we conclude that
for $J=|j$ Eq. (\ref{eq:nuJequal0}) can be satisfied only if
$s^{(|j)}_{jj}=0$, which in turn thanks to Eq. (\ref{eq:lambdaJ_app})
implies $\lambda_{|j}=0$.  Thus, Young diagrams
$\overline{\usbl}=\{Y_{|j},j\in {\rm Irr}(U^{\otimes N})\}$ correspond
to those $J$ that do not belong to the set $\usbl$ defined in the
theorem.

On the other hand, consistently with the notation for
$\overline{\usbl}$, we define
$\usbl=\{Y_{j\backslash \lrcorner i}, j\in {\rm Irr}(U^{\otimes N}),
i\in\corn_j\}$. Let us remind that ${\rm Irr}(U^{\otimes N})$ is
exactly constituted by all Young diagrams consisting of $N$ boxes and
having at most $d$ rows.  This implies
$\usbl={\rm Irr}(U^{\otimes N-1})$, because by removing in any
possible way a single box from Young diagrams in
${\rm Irr}(U^{\otimes N})$ we get all possible Young diagrams in
${\rm Irr}(U^{\otimes N-1})$. More operationally, for any Young
diagram $J\in\usbl$ we can add a box to the first row and get some
element $j\in{\rm Irr}(U^{\otimes N})$, which can be reversed to prove
the claim.

Moreover, for every subset $\usbl_j=\{Y_{j\backslash \lrcorner i}, i\in\corn_j\}$ of $\usbl$ we have that
\begin{align}
\label{eq:sumpg2}
\DP_j=\sum_{J\in \usbl_j} \DP_J,
\end{align}
which is just a reformulation of Eq. (\ref{eq:sumpg1}), because Young diagrams $J\in \usbl_j$ have $N-1$ boxes and an addition of a single box changes them to Young diagram $j$ consisting of $N$ boxes.

Let us pick any element $J\in \usbl$. Let us now specify all the Young diagrams $Y_j, \; j\in {\rm Irr}(U^{\otimes N})$, which contain $J$ in the decomposition of $U_j \otimes U^*$. We denote such set $\mathsf{j}_J$ and it coincides with $\mathsf{j}_{JJ}$ defined below Eq. (\ref{eq:decompototal}). These are such Young diagrams $j$ in which by removing one corner box we get $Y_J$. This is the same as saying that $\mathsf{j}_J$ is the set of Young diagrams of $SU(d)$ group that can be obtained from $J$ by addition of a single box,
because ${\rm Irr}(U^{\otimes N})$ 
contains all possibly emerging Young diagrams. 
This implies that $d d_J=\sum_{j\in \mathsf{j}_J} d_j$, because this corresponds to the decomposition of an operator $U_J\otimes U$, which acts on $d d_J$ dimensional space.
Thus, we proved that the set $\usbl$ can be equivalently defined as
\begin{align}
  \label{eq:crucialidentity}
  \usbl&=\{J\in {\rm Irr}(U^{\otimes N}\otimes U^*) | d d_J = \sum_{j\in\mathsf{j}_{JJ}}d_j \} \nonumber\\
  &=\{Y_{j\backslash \lrcorner i}, j\in {\rm Irr}(U^{\otimes N}), i\in\corn_j\} \nonumber\\
  &={\rm Irr}(U^{\otimes N-1})
\end{align}
Furthermore, we showed that for $J\notin \usbl$ $s^{J}=0$ and consequently $\lambda_J=0$.

In order to proceed we apply Lemma \ref{lmm:specialX} for every $J\in \usbl$. Expression $\sqrt{\frac{p_jp_{j'}}{d_jd_{j'}}} s^{(J)}_{jj'}$ plays the role of $X_{jj'}$, $c_j=\frac{d_j}{d d_J}$ and the remaining assumption is guaranteed by Eq. (\ref{eq:nuJequal0}).
As a consequence, we get that
the condition \eqref{eq:simplifiedlambda}
of perfect retrieving and $s^{(J)}\geq 0$ is equivalent to
\begin{align}
 \label{eq:rjjform}
   s^{(J)}_{jj'} =\mu_{J} \sqrt{\frac{d^3_j d^3_{j'}}{p_j p_{j'}}} \quad \quad \mu_{J}\geq 0 \quad \; \forall J\in \usbl
\end{align}
Thus, fulfillment of Eq. (\ref{eq:rjjform}) guarantees the perfect retrieving of unitary transformations and we can rewrite the probability of success as
\begin{align}
  \label{eq:19}
  \lambda = 
  \sum_{J\in \usbl} \sum_{j,j' \in \mathsf{j}_{JJ} } \frac{d_J}{d^2}\mu_{J} d_j d_{j'}  =
  \sum_{J\in\usbl} d_J^3 \mu_{J},
\end{align}
where we used Eqs. (\ref{eq:relevantstate_app}),(\ref{eq:lambdaJ_app}) and the defining property of the set $\usbl$.

The constraint that $\mathcal{R}_s$ is a quantum operation
translates into its Choi operator as
$\Tr_D[R_s]\leq I$. Since $R_s$ satisfies Eq.~\eqref{eq:commretriev},
we obtain $ \left[  \Tr_D[R_s],   U'V' \otimes U^*_C \right]=0 $, which implies
$\Tr_D[R_s] = \bigoplus_{J}\bigoplus_{j\in \mathsf{j}_{JJ}} I_J \otimes
I_j \,\frac{d_J}{d_j}\, s^{(J)}_{jj}$.
This implies
\begin{align}
  \label{eq:subnormalization}
  \begin{aligned}
    \Tr_D[R_s]\leq I \Leftrightarrow
\frac{d_J}{d_j} s^{(J)}_{jj} \leq 1 \; \; \forall J,\; \forall j\in  \mathsf{j}_{JJ}.
  \end{aligned}
\end{align}
Let us express 
the above condition 
via coefficients $\mu_J$ using Eq. (\ref{eq:rjjform})

\begin{align}
  \label{eq:20}
  \mu_J d_j^2 \leq \frac{p_j}{d_J} \quad \quad \; \; \forall J,\; \forall j\in  \mathsf{j}_{JJ}.
\end{align}
Let us remind the definition of state $\ket{\psi}$ from the main text.
\begin{align}
  \label{eq:optstate}
  \begin{aligned}
  \ket{\psi} :=  \bigoplus_j \sqrt{\frac{p_j}{d_j} } \Ket{I_j} \in
  \tilde{\mathcal{H}}\quad\quad
p_j \geq 0, \; \sum_j p_j =1\,,
  \end{aligned}
\end{align}
Collecting Eqs.(\ref{eq:19}), (\ref{eq:rjjform}),(\ref{eq:20}) and (\ref{eq:optstate}) we see that the optimization of probabilistic storage and retrieval is reduced to a linear program stated in the Theorem \ref{thm:lp}.
\end{proof}

\section{$N\rightarrow 1$ PSAR for qubit channels - The case of odd $N$}
\label{app:oddn}
All the steps are completely analogical to the derivation valid for even $N$ presented in the main text.
The main difference is that the IRR's with minimum and maximum spin ($J=0$ and $J=\frac{N+1}{2}$) have only multiplicity one.
For odd $N$ (identically as for even $N$) the investigation of the conditions of perfect learning reveals that $s^{\frac{N+1}{2}}$ has to be zero.
On the other hand, $J=0$ can be involved in the perfect storing and retrieving. Other expressions remain identical, but now $J$ is an integer.
In particular, we choose $f_J$ according to the same formula as in the main text 
\begin{align}
\label{eq:coeffj}
f_j=\frac{1}{2}\frac{2j}{2j+1}\left(\frac{2j+2}{N}+1\right)
\end{align}
and the whole proof goes on analogically to the case of even $N$.

\section{$N\rightarrow 1$ PSAR for qudit channels - The general case of $SU(d)$}

The goal of this section is to prove Theorem \ref{thm:Main} from the main text.
\begin{theorem}
\label{thm:Main}
The optimal probability of success of $N\to 1$ probabilistic storage
and retrieval of a 
unitary channel $\chu(.)=U . U^\dagger$, $U\in SU(d)$ equals
$\lambda=N/(N-1+d^2)$.
The optimal state
for storage is
$
  \ket{\psi} :=  \bigoplus_{j }
  \sqrt{\frac{d_j}{L} } \Ket{I_j} \in  {\mathcal{H}}_M$
where
$ L := \sum_{j} d_j^2
$
and $ j\in {\rm
  Irr}(U^{\otimes N})$.
\end{theorem}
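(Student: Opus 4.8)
The plan is to convert Theorem~\ref{thm:lp} into an explicit linear-programming computation. Recall from the proof of Theorem~\ref{thm:lp} that the feasible index set is $\usbl={\rm Irr}(U^{\otimes N-1})$ and that, through Schur--Weyl duality, the pairs $(J,j)$ entering the constraints are exactly those in which the Young diagram $j$ (of $N$ boxes, at most $d$ rows) is obtained from $J$ (of $N-1$ boxes) by adding a single box; because $Y_J\times\Box$ contains no repeated diagram, this box --- hence its content $c_{Jj}$ (column index minus row index) --- is uniquely determined by the pair. I would then prove $\lambda=N/(N-1+d^2)$ by (i) exhibiting a feasible point of the program attaining this value, and (ii) establishing the matching upper bound through a single, positively weighted, combination of the constraints, the weights being dictated by a hook--content identity.

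For achievability, take $p_j=d_j^2/L$ with $L:=\sum_{j\in{\rm Irr}(U^{\otimes N})}d_j^2$ and $\mu_J=1/(L d_J)$. Then $\sum_j p_j=1$ by definition of $L$, and each constraint $0\le d_J\mu_J\le p_j/d_j^2$ reduces to $1/L\le 1/L$, so the point is feasible; it corresponds to the storage state $\ket{\psi}=\bigoplus_j\sqrt{d_j/L}\,\Ket{I_j}$ of the statement. Its value is $\lambda=\sum_{J\in\usbl}d_J^3\mu_J=\tfrac1L\sum_{J\in{\rm Irr}(U^{\otimes N-1})}d_J^2$. Using Schur's dimension count $\sum_{j\in{\rm Irr}(U^{\otimes N})}d_j^2=\binom{N+d^2-1}{N}$ \cite{schurthesis} (and the same with $N$ replaced by $N-1$), this equals $\binom{N+d^2-2}{N-1}\big/\binom{N+d^2-1}{N}=N/(N-1+d^2)$, as claimed.

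For optimality, assign to the constraint labelled by the edge $(J,j)$ the weight $v_{Jj}:=\dfrac{d_J\,(d+c_{Jj})}{N\,d_j}\ge 0$ (nonnegativity holds since a box in a diagram with at most $d$ rows has content $\ge 1-d$). Multiplying $d_J d_j^2\mu_J\le p_j$ by $v_{Jj}$ and summing over all edges, the right-hand side becomes $\sum_j p_j\sum_{J=j-\Box}v_{Jj}$ and the left-hand side becomes $\sum_{J\in\usbl}d_J\mu_J\sum_{j=J+\Box}v_{Jj}d_j^2$. The argument is then reduced to the two summation identities
\begin{align}
\label{eq:hc1}
\sum_{J = j - \Box} (d + c_{Jj})\, d_J &= N\, d_j,\\
\label{eq:hc2}
\sum_{j = J + \Box} (d + c_{Jj})\, d_j &= (N - 1 + d^2)\, d_J,
\end{align}
for $j\in{\rm Irr}(U^{\otimes N})$ and $J\in{\rm Irr}(U^{\otimes N-1})$ respectively --- these are the hook--content identities of \cite{sedlak2018}. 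Indeed, \eqref{eq:hc1} gives $\sum_{J=j-\Box}v_{Jj}=1$, so the right-hand side collapses to $\sum_j p_j=1$; and \eqref{eq:hc2} gives $\sum_{j=J+\Box}v_{Jj}d_j^2=\tfrac{N-1+d^2}{N}d_J^2$, so the left-hand side equals $\tfrac{N-1+d^2}{N}\sum_{J\in\usbl}d_J^3\mu_J=\tfrac{N-1+d^2}{N}\lambda$. Hence $\lambda\le N/(N-1+d^2)$, matching the construction.

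The routine parts are the feasibility check and the binomial simplification; the genuine content --- and the step I expect to be the main obstacle --- is the pair of hook--content identities \eqref{eq:hc1}--\eqref{eq:hc2}, which I would prove separately by induction on the number of boxes (carried out in \cite{sedlak2018}). I note that the boundary diagrams (few boxes, or single rows/columns) require no special treatment: both the sums in \eqref{eq:hc1}--\eqref{eq:hc2} and the constraints of the program range only over the legal box moves, so the argument is uniform in $N$ and $d$, and for $N=1$ it reproduces the $1\to1$ value $\lambda=1/d^2$ already derived.
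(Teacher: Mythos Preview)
Your proposal is correct and follows essentially the same route as the paper: your weight $v_{Jj}=d_J(d+c_{Jj})/(Nd_j)$ is in fact identical to the paper's $f(j,J)=\widetilde d_J/\widetilde d_j$ (since $l_j/l_J=d+c_{Jj}$ gives $v_{Jj}=h_j/(Nh_J)=f(j,J)$), so your first hook--content identity is the classical branching rule $\widetilde d_j=\sum_{J}\widetilde d_J$ (the paper's Eq.~\eqref{eq:sumpg2}) and your second identity is exactly the paper's Lemma~\ref{lmm:identity42groups}, which is in turn reduced to the content-squared identity of~\cite{sedlak2018}. The only genuine deviation is your achievability computation via Schur's binomial count $\sum_j d_j^2=\binom{N+d^2-1}{N}$, which is slightly more direct than the paper's re-invocation of Lemma~\ref{lmm:identity42groups} for the same step.
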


\begin{proof}
The idea of the proof is analogical to the case of qubit unitary transformations. However, in the qudit case
the relations between IRRs are more complicated and we will need some of the facts derived in the proof of Theorem \ref{thm:lp} and a new combinatorial identities, which were derived in \cite{sedlak2018} by some of us.

Let us define positive function 
\begin{align}
\label{eq:deffjJ}
f(j,J)=\frac{\DP_J}{\DP_j},
\end{align}
for all $j \in {\rm Irr}(U^{\otimes N})$, $J\in \usbl_j$ or equivalently for all $J\in {\rm Irr}(U^{\otimes N-1})$, $j\in \mathsf{j}_{JJ}=\mathsf{j}_{J}$.
Let us note that thanks to Eq. (\ref{eq:sumpg2}) we have
\begin{align}
\label{eq:sumJfjJ}
\sum_{J\in \usbl_j} f(j,J)=1 \quad \forall j \in {\rm Irr}(U^{\otimes N}).
\end{align}
For the proof of the main theorem we need a new theorem from combinatorics \cite{sedlak2018} and a technical lemma.


\begin{theorem}
\label{thm:combthm1}
For any Young diagram $J$ consisting of $N-1$ boxes
it holds that
\begin{align}
  \label{eq:combthm1}
  \sum_{j \in \jJall} (C_j-R_j)^2 \DP_j = N(N-1)\; \DP_J,
\end{align}
where the sum runs through all Young diagrams $j$ that can be obtained from $J$ by addition of a single box,
\begin{itemize}
  \item $\DP_J, \DP_j$ are dimensions of IRRs of the symmetric group $S(N-1)$, $S(N)$, respectively
  \item $C_j$ is the number of the column of the added box,
  \item $R_j$ is the number of the row of the added box that leads from diagram $J$ to the diagram $j$.
\end{itemize}
\end{theorem}

\begin{lemma}
\label{lmm:identity42groups}
For any Young diagram $J\in {\rm Irr}(U^{\otimes N-1})$ the following
identity for dimensions $d_J, d_j$ of IRRs of $SU(d)$ group and for
dimensions of $\DP_J, \DP_j$ of the symmetric group, holds
\begin{align}
\label{eq:lemma2}
\sum_{j\in \mathsf{j}_{JJ}}  \frac{d_j^2}{\DP_j} =\frac{N-1 + d^2}{N}\;\;\frac{d_J^2}{\DP_J} \quad \quad \forall J\in {\rm Irr}(U^{\otimes N-1}),
\end{align}
where $\mathsf{j}_{JJ}=\{j\in {\rm Irr}(U^{\otimes N})\; | \; J\in{\rm Irr}(U_j \otimes U^*) \}$.
\end{lemma}

\begin{proof}
Let us remind expressions for the dimensions of IRRs that are involved (for detailed explanation see \cite{fulton2013}):
\begin{align}
\label{eq:defdim}
d_j =\frac{l_j}{h_j} \quad d_J =\frac{l_J}{h_J} \quad \DP_j=\frac{N!}{h_j} \quad \DP_J=\frac{(N-1)!}{h_J},
\end{align}
where $h_j, h_J$ denote the hook lengths factors and
$l_j$ is $\prod_{box \in Y_j} (d - R_i +C_i)$ (here $R_i$ ($C_i$) is the row (column) of the current box from the Young diagram $j$).
Using Eq. (\ref{eq:defdim}) we can write Eq. (\ref{eq:lemma2}) as
\begin{align}
  \sum_{j\in \mathsf{j}_{JJ}} \frac{{l_j}^2}{{l_J}^2}  \frac{{h_J}}{{h_j}} = d^2 + N - 1
  \label{eq:crucial}
\end{align}
Thus, proving Lemma \ref{lmm:identity42groups} is equivalent to proving that Eq. (\ref{eq:crucial}) holds.
We start by direct evaluation of the left hand side. We obtain:
\begin{align}
\label{eq:defleft1}
  \sum_{j\in \mathsf{j}_{JJ}} \frac{{l_j}^2}{{l_J}^2}  \frac{{h_J}}{{h_j}}
  &= \sum_{j\in \mathsf{j}_{JJ}} (d-R_j+C_j)^2 \frac{{h_J}}{{h_j}}
\end{align}
where $R_j$ is the row number and
$C_j$ the column number of the additional box in Young diagram $Y_j$ with respect to $Y_J$.
At this point it is useful to realize that for Young diagrams $J\in {\rm Irr}(U^{\otimes N-1})$ with $d$-rows, there is a difference between the set $\mathsf{j}_{JJ}=\mathsf{j}_{J}$ and the set $\jJall$ of all Young diagrams that can be obtained from $J$ by addition of a single box. The difference is exactly one Young diagram, which is obtained from $J$ by adding the box into the $d+1$-th row, in the first column. Luckily, the bracket $(d-R_j+C_j)$ for this diagram evaluates to zero ($d-(d+1)+1=0$), so we can sum also through this term in Eq. (\ref{eq:defleft1}) without changing its value. This is useful especially for $d < N$, because later on we want to apply Theorem \ref{thm:combthm1}, where the summation runs through the set $\jJall$. Thus,
left hand side of Eq. (\ref{eq:crucial}) can be equivalently rewritten as
\begin{align}
\label{eq:defFGH}
  \sum_{j\in \mathsf{j}_{JJ}} \frac{{l_j}^2}{{l_J}^2}  \frac{{h_J}}{{h_j}}
  = \sum_{j\in \jJall} (d-R_j+C_j)^2 \frac{{h_J}}{{h_j}} = F+G+H,
\end{align}
where we expanded the square and we defined
\begin{align}
F &=  d^2 \sum_{j\in \jJall} \frac{{h_J}}{{h_j}} \qquad \;
 G=  \sum_{j\in \jJall} (C_j-R_j)^2 \frac{{h_J}}{{h_j}}  \\
  H&=   \sum_{j\in \jJall} 2d(C_j-R_j) \frac{{h_J}}{{h_j}}.
\end{align}
It is known \cite{vershik1992} that
\begin{align}
\label{eq:identityuseful}
\sum_{j\in \jJall}  \DP_j = N \DP_J \quad\quad \forall J\in {\rm Irr}(U^{\otimes N-1}),
\end{align}
which can be using Eq. (\ref{eq:defdim}) equivalently rewritten as
\begin{align}
\label{eq:identityuseful1}
  \sum_{j\in \jJall}  \frac{h_J}{h_j} = 1 \quad\quad \forall J\in {\rm Irr}(U^{\otimes N-1}).
\end{align}
Using the identity (\ref{eq:identityuseful1}) we have that
$F = d^2$.
Moreover, we have
\begin{align}
  \sum_{j\in \jJall} (d-R_j+C_j) \frac{{h_J}}{{h_j}} =
  \sum_{j\in \jJall} \frac{{l_j}}{{l_J}}  \frac{{h_J}}{{h_j}}  =
  \sum_{j\in \mathsf{j}_{JJ}} \frac{{d_j}}{{d_J}}  = d,
\end{align}
where we used Eq. (\ref{eq:crucialidentity}) and the fact that $d_j=0$ if $j$ has more than $d$ rows. 
On the other hand
\begin{align}
  \sum_{j\in \jJall} (d-R_j+C_j) \frac{{h_J}}{{h_j}} &=
  d\sum_{j\in \jJall}  \frac{{h_J}}{{h_j}}  + \sum_{j\in \jJall} (C_j-R_j) \frac{{h_J}}{{h_j}} \\
  &=  d + \frac{1}{2d}H
\end{align}
and then $H=0$.
Combining the above considerations equation (\ref{eq:defFGH}) reads
\begin{align}
\label{eq:LHSeval1}
  \sum_{j\in \mathsf{j}_{JJ}} \frac{{l_j}^2}{{l_J}^2}  \frac{{h_J}}{{h_j}}  =
  d^2 + \sum_{j\in \jJall} (C_j-R_j)^2 \frac{{h_J}}{{h_j}}.
\end{align}
Comparing Eq. (\ref{eq:LHSeval1}) with Eq. (\ref{eq:crucial}) we conclude we still need to prove
\begin{align}
  \sum_{j\in \jJall} (C_j-R_j)^2 \frac{h_J}{h_j} = N-1\;.
\end{align}
Luckily, the above equation is exactly the claim of Theorem \ref{thm:combthm1} written using Eq. (\ref{eq:defdim}). Thus, relaying on Theorem \ref{thm:combthm1} we conclude the proof.
\end{proof}

Let us continue with the proof of Theorem \ref{thm:Main}.
We multiply inequality (\ref{eq:20}) for every $J\in{\rm Irr}(U^{\otimes N-1})$ and every $j\in \mathsf{j}_{JJ}$ by $f(j,J)$. We sum these inequalities and thanks to Eqs. (\ref{eq:sumJfjJ}), (\ref{eq:optstate}) we get
\begin{align}
\label{eq:sumoffjJ}
\sum_{J\in{\rm Irr}(U^{\otimes N-1})} \sum_{j\in \mathsf{j}_{JJ}}& f(j,J)\; d_j^2 d_J \mu_J \nonumber\\
& \leq
\sum_{j \in {\rm Irr}(U^{\otimes N})} \sum_{J\in\usbl_j} f(j,J)p_j \nonumber\\
&\leq \sum_{j \in {\rm Irr}(U^{\otimes N})} p_j=1\;.
\end{align}
Let us define
\begin{align}
\label{eq:genzJ}
z_J &\equiv d_J \sum_{j\in \mathsf{j}_{JJ}} f(j,J)\; d_j^2 \quad \quad \forall J\in {\rm Irr}(U^{\otimes N-1}), \nonumber\\
&=d_J\; \DP_J \sum_{j\in \mathsf{j}_{JJ}}\frac{d_j^2}{\DP_j}=\frac{N-1 + d^2}{N}\;\;d_J^3
\end{align}
where we used Eq. (\ref{eq:deffjJ}) and Lemma \ref{lmm:identity42groups}.
Using the definition (\ref{eq:genzJ}) we can rewrite
inequality (\ref{eq:sumoffjJ}) as
\begin{align}
\label{eq:genup1}
\sum_{J\in{\rm Irr}(U^{\otimes N-1})} z_J \mu_{J}  \leq 1.
\end{align}
We remind that ${\rm Irr}(U^{\otimes N-1})=\usbl$. 
Taking this into account inequality (\ref{eq:genup1}) directly implies
\begin{align}
\label{eq:mainub1}
\frac{N-1+d^2}{N}\sum_{J\in \usbl} d_J^3 \mu_J \leq 1 \; \Leftrightarrow \;
\lambda=\sum_{J\in \usbl} \lambda_{J}\leq \frac{N}{N-1+d^2}.
\end{align}
Next, we finish the proof of Theorem \ref{thm:Main} by showing that the upper bound (\ref{eq:mainub1}) can be saturated. One can choose
\begin{align}
\label{eq:mainpjmuJ}
p_j&=\frac{d_j^2}{\sum_{k\in {\rm Irr}(U^{\otimes N})} d_k^2 } & \forall j\in {\rm Irr}(U^{\otimes N})\nonumber \\
\mu_J&=\frac{1}{d_J}\frac{1}{\sum_{k\in {\rm Irr}(U^{\otimes N})} d_k^2 } & \forall J\in\usbl 
\end{align}
and insert them into Eqs. (\ref{eq:optimizationpjmj}). It is easy to see that requirements on $p_j$ are satisfied and inequalities between $p_j$ and $\mu_J$ are actually all saturated. Let us now evaluate $\lambda$. Inserting Eq. (\ref{eq:mainpjmuJ}) into Eq. (\ref{eq:optimizationpjmj}) we obtain
\begin{align}
\label{eq:evallambda1}
\lambda 
 &=\frac{\sum_{J\in\usbl} d_J^2}{\sum_{k\in {\rm Irr}(U^{\otimes N})} d_k^2 } \nonumber \\
 &=\frac{N}{N-1 + d^2}\frac{1}{\sum_{k\in {\rm Irr}(U^{\otimes N})} d_k^2 }
 \sum_{J\in\usbl} \sum_{j\in \mathsf{j}_{JJ}} d_j^2 \frac{\DP_J}{\DP_j}       \nonumber \\
 &=\frac{N}{N-1 + d^2}\frac{1}{\sum_{k\in {\rm Irr}(U^{\otimes N})} d_k^2 }
  \sum_{j\in {\rm Irr}(U^{\otimes N})} d_j^2 \sum_{J\in\usbl_j} \frac{\DP_J}{\DP_j}  \nonumber \\
 &=\frac{N}{N-1 + d^2}\frac{1}{\sum_{k\in {\rm Irr}(U^{\otimes N})} d_k^2 }
  \sum_{j\in {\rm Irr}(U^{\otimes N})} d_j^2                                  \nonumber \\
 &=\frac{N}{N-1 + d^2}\; ,
\end{align}
where we used Lemma \ref{lmm:identity42groups}, exchanged the order of the sums and used the Eq. (\ref{eq:sumJfjJ}). Thanks to knowledge of $\mu_J$ and $p_j$ we can completely specify the state $\ket{\psi}$ and the retrieving operation $\mathcal{R}_s$. Thus, we can build valid storing and retrieving strategy, which succeeds with probability $N/(N-1 + d^2)$ saturating the upper bound (\ref{eq:mainub1}) and concluding the proof.
\end{proof}

\section{Alignment of reference frames}

We now review the quantum protocol for the alignment of reference
frames in a quantum communication scenario, as it was considered in
Ref. \cite{refframes}.  Let us consider the scenario in which one
party, called Alice, wants to send a qubit to another distant party, denoted as Bob. If
the qubit is encoded into a spin-$1/2$ particle Bob can recover the
quantum state $|\varphi\>$ if he and Alice share a reference frame for
orientation.  Otherwise, the lack of a shared frame amounts to having
a noisy channel and Bob receives a decohered state
$\rho = \int_{SU(2)} U |\varphi\>\<\varphi|U^\dag d U$.  This problem
can be circumvented if Alice, along with the quantum message
$|\varphi\> $, sends a state $|\psi\> $ as a token of her reference
frame.  Then Bob receives the state
$\rho_\psi = \int_{SU(2)} |\psi_U\>\<\psi_U| \otimes U
|\varphi\>\<\varphi|U^\dag d U$, from which he tries to retrieve the
message $|\varphi\>$. In the perfect retrieving scenario, Bob wants to
maximize the probability for recovering $|\varphi\>$ without any
error.  This scenario is equivalent to a storage and
retrieval protocol:
\begin{itemize}
\item the token $\ket{\psi}$ plays the role of the storage state.
  $\ket{\psi}$ is a multipartite state $\ket{\psi} \in
  \hilb{H}^{\otimes N}$
\item
  The effect of the misalignment can be thought of as the storing
  phase in which the state $\ket{\psi_U} := U^{\otimes N} \ket{\psi}$
  is created.
\item
  In the retrieving phase, Bob exploits the state
  $\ket{\psi_U}$ to retrieve the inverted channel $U^\dag$
  which is applied to the qubit $U |\varphi\>$.
\end{itemize}

There are two differences between this protocol and the SAR we
consider in our work. The first
difference is that $N$ uses of $U$ are given, but we are required to
retrieve $U^\dag$.
However, for $U \in SU(2)$, we can show that our optimal PSAR  protocol
which stores $U^{\otimes N}$ and retrieves $U$,
can be turned into a PSAR protocol which retrieves $U^{\dag}$
with the same probability of success.
If we had $\ket{\psi_{U^\dag}}$, then the retrieval phase of our
optimal PSAR protocol would
recover $U^{\dag}$ with the optimal probability of success $\lambda$
(which is the same for any $U \in SU(2)$).
In particular, for $U \in SU(2)$, the storage state
$\ket{\psi_{U^\dag}}$ can be created by exploiting $N$ uses of $U$ as
follows
\begin{align}
  \begin{split}
     \ket{\psi_{U^\dag}} &= U^{ \dag \otimes N} \otimes I \ket{\psi}
  =I \otimes    U^{ * \otimes N} \ket{\psi} =\\
&=  I \otimes  (\sigma_y U \sigma_y)^{\otimes N}\ket{\psi}
  \end{split}
\end{align}
where $\ket{\psi}$ is the optimal state for storage.

The second difference between SAR and the alignment protocol is that
we are not allowed to use an external reference system,
i.e. the ancillary system $\hilb{H}_{A'}$ in our protocol,
since it would correspond to a partially shared reference frame.
Since our protocol is less constraint than the alignment protocol,
the probability of success $\lambda$ of PSAR is an upper bound for the
probability of success of perfect alignment.
However both the strategy of Ref. \cite{refframes} and the optimal
PSAR protocol achieve the same $O(N^{-1})$ scaling, which is then optimal.

\end{document}